\newcolumntype{P}[1]{>{\centering\arraybackslash}p{#1}}
\newtheorem{thm}{Theorem}
\numberwithin{thm}{section}
\newtheorem{cor}[thm]{Corollary}
\newtheorem{lem}[thm]{Lemma}
\renewcommand{\p@subsection}{}
\renewcommand{\p@subsubsection}{}
\newcommand\bea{\begin{eqnarray}}
\newcommand\eea{\end{eqnarray}}
\newcommand\be{\begin{equation}}
\newcommand\ee{\end{equation}}
\newcommand\bes{\begin{subequations}}
\newcommand\ees{\end{subequations}}
\newcommand\bed{\begin{displaymath}}
\newcommand\eed{\end{displaymath}}
\newcommand\beal{\begin{aligned}}
\newcommand\eeal{\end{aligned}}
\newcommand\bew{\begin{widetext}}
\newcommand\eew{\end{widetext}}
\newcommand\beit{\begin{itemize}}
\newcommand\eeit{\end{itemize}}
\def\bea{\begin{array}}
\def\eea{\end{array}}
\newcommand\been{\begin{enumerate}}
\newcommand\eeen{\end{enumerate}}
\definecolor{red1}{rgb}{0.76, 0.23, 0.13}
\definecolor{green1}{rgb}{0.0, 0.5, 0.0}
\begin{document}

\title{Simple logical quantum computation with concatenated symplectic double codes}

\author[1,2]{Noah Berthusen\footnote{\href{mailto:noah.berthusen@quantinuum.com}{noah.berthusen@quantinuum.com}}}
\author[2]{Elijah Durso-Sabina}
\affil[1]{\small Joint Center for Quantum Information and Computer Science,

NIST/University of Maryland, College Park, Maryland 20742, USA}
\affil[2]{\small Quantinuum, Broomfield, CO 80021, USA}

\date{\today}

\maketitle
\renewcommand{\thefootnote}{\fnsymbol{footnote}}
\renewcommand{\thefootnote}{\arabic{footnote}}

\begin{abstract}
There have been significant recent advances in constructing theoretical and practical quantum error correcting codes that function well as quantum memories; however, performing fault-tolerant logical gates on these codes is less studied, and the protocols that do exist often require significant complexity. Building off the symplectic double construction, we investigate \textit{concatenated symplectic double} codes, which have a rich set of logical gates implementable using only physical single-qubit gates and qubit relabeling. Combined with an injected logical phase gate, the full Clifford group on a single codeblock is achieved through a functionally simple circuit. We perform circuit-level simulations of state preparation and quantum error correction on these codes and show that they have promising performance at near state-of-the-art physical error rates. As such, we argue that concatenated symplectic double codes are strong contenders as the underlying computational code on medium- to large-scale quantum computers.
\end{abstract}

\tableofcontents

\section{Introduction}

To achieve the computational speed-ups offered by quantum computers, it is widely believed that quantum error correction will be required to manage errors throughout the execution of the circuit. Considering only stabilizer codes~\cite{gottesman1997}, there are roughly three main categories of quantum error correcting codes being considered for large-scale fault-tolerant quantum computation: topological codes like the surface code~\cite{bravyi1998quantum, Kitaev_2003} have been widely used, particularly due to their amenability to planar architectures; however, they suffer from poor encoding rates which imply significant overheads at scale.
Quantum low-density parity-check (qLDPC) codes~\cite{breuckmann2021} offer better encoding rates and distances than topological codes but require qubit connectivity that some quantum computing hardware cannot support. Like topological codes, concatenated codes have a rich history, with the first proofs of the threshold theorem utilizing them~\cite{Knill_1998, Aharonov_Ben-Or_1999, sommers2025}. Recently, concatenated codes with high encoding rates have been developed~\cite{Goto_2024, Yamasaki_2024, Yoshida_2025}. These codes have the same connectivity issues as qLDPC codes as well as much higher check weights. Nonetheless, there is evidence to suggest they could be competitive in practice.

Besides ease of implementation, topological codes have the benefit of simple logical operations through transversal gates and lattice surgery, leading to fairly straightforward and reasonably efficient logical circuits. 
For the latter two code categories, much less is known about efficient ways to perform encoded gates fault-tolerantly. Perhaps the state of the art for qLDPC codes are the recently introduced SHYPS codes~\cite{malcolm2025}, which can execute any $m$-Clifford operation in at most $O(m)$ syndrome extraction rounds.
For other qLDPC codes, e.g. bivariate bicycle codes~\cite{bravyi2023highthreshold}, computation is facilitated by Pauli product measurements and implemented through generalized lattice surgery~\cite{Cowtan_2024, cross2024}. Typically $O(m)$ syndrome extraction rounds and $O(d)$ space and time overhead is required; although, recent improvements in decoding and constructing more efficient ancilla systems have shown that the required spacetime overhead can be reduced to near-constant~\cite{cain2024, zheng2025}. The story is similar for concatenated codes, where distillation and state injection are required for many logical gadgets, resulting in significant (practical) space and time overheads. We make progress on this front by presenting finite-sized concatenated codes with an easily implementable logical Clifford group.

Recently, Ref.~\cite{berthusen2025} investigated $C_4$-concatenated hypergraph product codes. With the additional structure of the concatenation scheme---coming from a particular logical-physical qubit assignment---they found that some fold-transversal~\cite{Breuckmann_2024} operations on the original hypergraph product code could be upgraded to SWAP-transversal on the $C_4$-concatenated code. Motivated by this result, we investigate a code construction introduced in Ref.~\cite{burton2024} which we call \textit{concatenated symplectic double (CSD)} codes. These codes are obtained by concatenating the $C_4$~\cite{Vaidman_1996, Grassl_1997} code with an outer CSS code constructed as the symplectic double cover of a non-CSS code. By construction, all fold-transversal gates on the symplectic double code are upgraded to be SWAP-transversal on the concatenated code, allowing for a significant number of logical gates to be implemented using only physical single-qubit gates and qubit relabeling. Adding a single logical phase gate through state injection allows us to generate the full Clifford group on every logical qubit in a single codeblock. Notably, every logical Clifford circuit on a single codeblock has the same, simple, physical circuit, and can be combined with error correction in such a way that functionally, the only operation being performed is QEC facilitated by teleportation. Such a gateset is well suited for quantum computers such as neutral atoms~\cite{Bluvstein_2023} and ion traps~\cite{Moses_2023} that can use qubit movement to facilitate relabeling.

The CSD construction is very general, and any non-CSS code can be used as the seed code. While a family of non-CSS codes would give rise to a family of CSD codes, we choose specific non-CSS codes which yield CSD codes with good parameters, an expressive gateset, and competitive circuit-level performance. 
The paper is structured as follows: in Section~\ref{sec:code_construction} we present the method to construct symplectic double codes and their concatenated counterpart as well as list several promising CSD code instances. In Section~\ref{sec:gadgets} we describe how the code construction method also yields SWAP-transversal logical gates. We then provide an example with the $[[16,4,4]]$ CSD code illustrating how these gates arise, and we show how an example logical circuit could be compiled into the available gateset. We additionally present state preparation and quantum error correction gadgets tailored for CSD codes. In Section~\ref{sec:zx_prep} and Section~\ref{sec:performance} we perform circuit-level simulations of state preparation and error-correction, respectively, and show that CSD codes display promising performance as quantum memories. We argue from the numerical simulation results and the unique compilation strategy that these codes could enable high-fidelity logical computation. 
We conclude in Section~\ref{sec:discussion_outlook} by posing several open questions.


\section{Concatenated symplectic double codes}
\label{sec:code_construction}

The symplectic double~\cite{burton2024} was recently introduced as a way to construct CSS codes with fault-tolerant Clifford gates. 
When the base code is a CSS code $C$, the symplectic double construction yields a trivial double cover of $C$; that is, we get a $[[2n,2k,d]]$ code obtained from stacking two copies of $C$, $C \oplus C$. 
The more interesting application of the symplectic double is when the base code $C$ is not CSS. In this case, the double cover is not trivially two copies of $C$, but is instead a new CSS code which can be thought of as a trivial double cover with a global twist. The resulting code is called the symplectic double of $C$ and is denoted $\mathfrak{D}(C)$. In the following, we focus on seed codes $C$ which are not CSS. 

The parity check matrix of a $[[n,k,d]]$ non-CSS code $C$ is represented as a $m \times 2n$ parity check matrix $H = (H_X ~|~ H_z)$. Then, the symplectic double of $C$, $\mathfrak{D}(C)$, is defined to be the CSS code with the following symplectic parity check matrix:

\begin{equation}
\mathfrak{D}(H) = \begin{pmatrix}
\begin{array}{cc|cc}
H_X & H_Z & 0 & 0 \\
0 & 0 & H_Z & H_X \\
\end{array}
\end{pmatrix}.
\label{eq:pcm_double}
\end{equation}

\begin{thm}[Theorem 3.2 of~\cite{burton2024}]
    \label{thm:double}
    Given a quantum code $C$ with parameters $[[n,k,d]]$, $\mathfrak{D}(C)$ is a CSS quantum code with parameters $[[2n,2k,\ge d]]$.
\end{thm}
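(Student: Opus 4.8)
The plan is to check the three assertions — that $\mathfrak{D}(C)$ is CSS, that it encodes $2k$ logical qubits, and that its distance is at least $d$ — one at a time, reducing each to a property of the seed code $C$.

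First I would simply read off the CSS structure. Viewing $\mathfrak{D}(H)$ as a symplectic check matrix on $2n$ qubits, its $X$-part is $A=(H_X\mid H_Z)$ and its $Z$-part is $B=(H_Z\mid H_X)$, and these two blocks occupy disjoint sets of rows, so $\mathfrak{D}(H)$ is literally in CSS form $\big(\begin{smallmatrix}A&0\\0&B\end{smallmatrix}\big)$; the only thing to verify is $AB^{\transpose}=0$ over $\mathbb{F}_2$. But $AB^{\transpose}=H_XH_Z^{\transpose}+H_ZH_X^{\transpose}$, which is exactly the Gram matrix of the symplectic form on the rows of $H$, and this vanishes because the stabilizer generators of $C$ pairwise commute. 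For the dimension, $B$ is obtained from $A$ by swapping its two $n$-column halves, so $\operatorname{rank}A=\operatorname{rank}B=\operatorname{rank}H=n-k$, and hence $\mathfrak{D}(C)$ has $2n-2(n-k)=2k$ logical qubits.

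The distance is the only part that needs real work, and I would do it through the logical operators. Since $\mathfrak{D}(C)$ is CSS, its distance equals $\min(d_X,d_Z)$, where $d_X$ (resp.\ $d_Z$) is the least weight of a pure-$X$ (resp.\ pure-$Z$) representative of a nontrivial logical coset; this standard fact holds because any logical $X(a)Z(b)$ has at least one of $X(a),Z(b)$ a nontrivial logical and $\operatorname{wt}(X(a)Z(b))\ge\max(\operatorname{wt}X(a),\operatorname{wt}Z(b))$. Writing a pure-$X$ operator on the $2n$ qubits as $X(a_1,a_2)$ with $a_1,a_2\in\mathbb{F}_2^{n}$, the requirement that it commute with all $Z$-checks (rows of $B$) is $H_Z a_1^{\transpose}+H_X a_2^{\transpose}=0$, which is precisely the condition that $X(a_1)Z(a_2)$ lie in the normalizer of the stabilizer group of $C$; likewise $X(a_1,a_2)$ is a stabilizer of $\mathfrak{D}(C)$ iff $(a_1,a_2)$ is in the row space of $A$, i.e.\ iff $X(a_1)Z(a_2)$ is a stabilizer of $C$. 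Thus every nontrivial $X$-logical of $\mathfrak{D}(C)$ maps to a genuine (possibly non-CSS) logical $X(a_1)Z(a_2)$ of $C$, and its weight is $\operatorname{wt}(a_1)+\operatorname{wt}(a_2)\ge|\operatorname{supp}(a_1)\cup\operatorname{supp}(a_2)|=\operatorname{wt}\big(X(a_1)Z(a_2)\big)\ge d$, so $d_X\ge d$. Running the same computation with the two halves of $A$ and $B$ interchanged gives $d_Z\ge d$ via the correspondence $(b_1,b_2)\mapsto X(b_2)Z(b_1)$, and therefore $d(\mathfrak{D}(C))=\min(d_X,d_Z)\ge d$.

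The main obstacle here is bookkeeping rather than depth: one must fix the symplectic conventions so that the commutation condition for a pure-$X$ operator of $\mathfrak{D}(C)$ lines up exactly with the normalizer condition for a non-CSS logical of $C$, and one must confirm that the ``not a stabilizer'' condition transports correctly along the correspondence (both directions). The slack in the weight inequality — $\operatorname{wt}(a_1)+\operatorname{wt}(a_2)$ versus the size of the union of supports — is precisely why the conclusion is $\ge d$ rather than $=d$; it is tight exactly when some minimum-weight logical of $C$ has disjoint $X$- and $Z$-support.
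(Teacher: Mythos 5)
The paper you are reading does not actually contain a proof of this theorem; it is imported verbatim as Theorem~3.2 of the cited reference and used as a black box. So there is no ``paper's own proof'' to compare against. On its own merits, your proof is correct and complete: the CSS structure, the $AB^{\transpose}=0$ computation via symplectic commutativity of the rows of $H$, the rank bookkeeping giving $2k$ logicals, and the key distance argument via the bijection $(a_1,a_2)\mapsto X(a_1)Z(a_2)$ between pure-$X$ logical representatives of $\mathfrak{D}(C)$ and (possibly non-CSS) logicals of $C$ — together with the inequality $\operatorname{wt}(a_1)+\operatorname{wt}(a_2)\ge|\operatorname{supp}(a_1)\cup\operatorname{supp}(a_2)|$ — is exactly the natural way to prove the bound, and your observation that the slack explains why the bound is $\ge d$ rather than $=d$ is a nice touch. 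This is essentially the same route as the original proof in the cited reference.

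One minor notational caveat: be explicit that for an $[[n,k,d]]$ code one takes $H$ to have full row rank $n-k$ (otherwise $\operatorname{rank} H = n-k$ needs to be stated as an assumption rather than a count of rows). This is standard and changes nothing substantive.
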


The resulting symplectic double code $\mathfrak{D}(C)$ acquires a symmetry from the double cover called a \textit{ZX}-duality. A $ZX$-duality of a CSS code is a permutation $\tau: n \rightarrow n$ that swaps the $X$- and $Z$- sectors of the code; that is, it takes $X$-type stabilizers and logicals to $Z$-type stabilizers and logicals, and vice versa.  
For symplectic double codes, the double cover provides the fixed-point-free involutory ZX-duality $\tau$ with orbits of the form $\{i, i+n\}_{i=1, ..., n}.$ This can be interpreted as a symmetry between the copies of $C$ in $\mathfrak{D}(C)$. 

It is using this ZX-duality $\tau$ that we now construct a \textit{concatenated symplectic double} code. This code will be a two-level concatenated code with $\mathfrak{D}(C)$ as the outer code and the $C_4$ code~\cite{Vaidman_1996, Grassl_1997} as the inner code. In this work, we define the $C_4$ code with two stabilizer generators $XXXX$, $ZZZZ$ and two pairs of anticommuting logical operators $XXII$, $ZIZI$, $XIXI$, $ZZII$. Note that there are other valid choices for the $C_4$ logical operators which give rise to slightly different logical gate implementations, see Ref.~\cite{sayginel2024fault}.
We focus on the concatenation method as described in Section 3.5 of Ref.~\cite{gottesman1997} and Procedure 1 of Ref.~\cite{berthusen2025}. In this construction, the logical qubits of several inner code blocks act as the physical qubits of the outer stabilizer code. As such, an important consideration is the specific mapping of inner logical qubit to outer physical qubits, which can have effects on the distance of the resulting concatenated code as well as ease of implementing logical operations, see the next section. We let the ZX-duality $\tau$ inform this mapping; in particular, we assign physical qubits $\{i, i+n\}_{i=1, ..., n}$ to the same $C_4$ block. In other words, pairs of physical qubits from $\mathfrak{D}(C)$ are encoded as the logical qubits in a single $C_4$ codeblock. Since $\tau$ is fixed-point-free, every physical qubit of $\mathfrak{D}(C)$ is assigned to a $C_4$ block in this way. We call the resulting, self-dual CSS code a \textit{concatenated symplectic double code} and denote it as $C_4 \otimes_\tau \mathfrak{D}(C)$. When a family of base non-CSS codes is LDPC with maximum stabilizer weight $q$, the resulting family of CSD codes is also LDPC with maximum stabilizer weight $2q$. When the seed family has growing stabilizer weight, so too will the CSD code family. We note that this construction was mentioned in Ref.~\cite{burton2024} but not extensively studied. $C_4$-concatenated codes with alternative outer codes have also been studied previously~\cite{criger2016, berthusen2025}. The entire procedure to construct a CSD code is illustrated in Fig.~\ref{fig:schematic}.

\begin{figure}
    \centering
    \includegraphics[width=\linewidth]{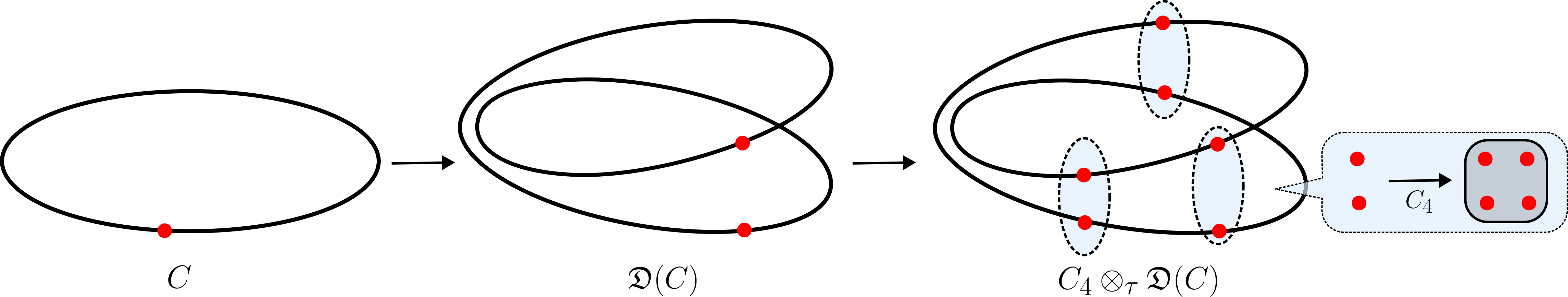}
    \caption{\small Schematic diagram of constructing a CSD code. A symplectic double code $\mathfrak{D}(C)$ is obtained by taking the symplectic double cover of a non-CSS code $C$, Eq.~\eqref{eq:pcm_double}. The concatenated symplectic double code $C_4 \otimes \mathfrak{D}(C)$ is then obtained by taking pairs of physical qubits of $\mathfrak{D}(C)$ related through the ZX-duality $\tau$ and encoding them as the logical qubits in the same $C_4$ codeblock. Figure adapted from Fig. 1 of Ref.~\cite{burton2024}.}
    \label{fig:schematic}
\end{figure}

\begin{lem}
    Given a quantum code $C$ with parameters $[[n,k,d]]$, $C_4 \otimes_\tau \mathfrak{{D}}(C)$ is a CSS quantum code with parameters $[[4n,2k,2d \ge D \ge d]]$.
    \label{thm:csd_params}
\end{lem}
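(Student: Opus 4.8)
The plan is to establish the four assertions in turn: the physical-qubit count $4n$, the logical-qubit count $2k$, the CSS property, and the distance sandwich $d\le D\le 2d$. The first three are essentially bookkeeping for code concatenation, and the distance lower bound is the only part that needs real care.

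By Theorem~\ref{thm:double} the outer code $\mathfrak D(C)$ has parameters $[[2n,2k,\ge d]]$. Since $\tau$ is a fixed-point-free involution, its $2n$ physical qubits partition into the $n$ orbits $\{i,i+n\}$; each orbit is identified with the two logical qubits of one $C_4=[[4,2,2]]$ block, so there are $n$ inner blocks and hence $4n$ physical qubits, while concatenation leaves the encoded-qubit count equal to that of the outer code, namely $2k$. For the CSS property I would invoke that $\mathfrak D(C)$ is CSS (Theorem~\ref{thm:double}), that $C_4$ is CSS, and that the chosen $C_4$ logical representatives $\bar X_1=XXII$, $\bar X_2=XIXI$, $\bar Z_1=ZIZI$, $\bar Z_2=ZZII$ are each purely $X$-type or purely $Z$-type: lifting an $X$-type outer stabilizer generator through the inner $X$-type logical representatives yields a purely $X$-type physical operator (and similarly for $Z$), so the concatenated stabilizer group splits into $X$- and $Z$-sectors.

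For the upper bound $D\le 2d$ I would exhibit an explicit weight-$2d$ logical operator. Take a minimum-weight logical Pauli of $C$ with symplectic representation $(a\mid b)$, so its support $S:=\operatorname{supp}(a)\cup\operatorname{supp}(b)$ has size $d$. A short symplectic-orthogonality computation against the two row blocks of $\mathfrak D(H)$ in Eq.~\eqref{eq:pcm_double} shows that $(a\mid b\mid 0\mid 0)$ commutes with every stabilizer of $\mathfrak D(C)$, and it cannot be a stabilizer of $\mathfrak D(C)$ since that would place $(a\mid b)$ in the rowspace of $H$, contradicting that it represents a logical operator of $C$; hence it is a nontrivial $X$-type logical operator of $\mathfrak D(C)$. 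This operator acts on physical qubit $i$ of $\mathfrak D(C)$ iff $a_i=1$ and on qubit $i+n$ iff $b_i=1$, so it is supported on exactly the $d$ inner blocks indexed by $S$, realizing on each block one of $\bar X_1$, $\bar X_2$, or $\bar X_1\bar X_2=IXXI$, each of $C_4$-weight $2$. The induced concatenated logical operator therefore has weight $2d$.

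For the lower bound $D\ge d$ I would run the standard projection argument for concatenated codes. Let $P$ be a nontrivial logical operator of $C_4\otimes_\tau\mathfrak D(C)$. On each inner block $b$, $P|_b$ commutes with $XXXX$ and $ZZZZ$, so it equals an inner-stabilizer element times a representative of some (possibly trivial) logical Pauli $\bar P_b$ on the two inner qubits; assembling the $\bar P_b$ gives an operator $\bar P$ on the $2n$ qubits of $\mathfrak D(C)$, and since $P$ commutes with every lifted outer stabilizer, $\bar P$ commutes with the whole stabilizer group of $\mathfrak D(C)$. If $\bar P$ lay in that stabilizer group (in particular if it were trivial), then $P$ would differ from a lift of an outer stabilizer by inner stabilizers, hence be a concatenated stabilizer, contradicting nontriviality of $P$; so $\bar P$ is a nontrivial logical operator of $\mathfrak D(C)$ and $\operatorname{wt}(\bar P)\ge d$ by Theorem~\ref{thm:double}. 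Since $\operatorname{wt}(P|_b)\ge\operatorname{wt}(\bar P_b)$ for every block --- the left side is at least $2$ whenever $\bar P_b\neq I$ because $C_4$ has distance $2$, and the inequality is trivial otherwise --- we conclude $\operatorname{wt}(P)=\sum_b\operatorname{wt}(P|_b)\ge\sum_b\operatorname{wt}(\bar P_b)=\operatorname{wt}(\bar P)\ge d$. The delicate points all live in this last step: confirming that $\bar P$ is genuinely non-stabilizer (not merely nonzero) and correctly handling blocks where $\bar P_b=I$ while $P|_b$ is a nontrivial inner stabilizer; everything else is bookkeeping plus the routine symplectic checks.
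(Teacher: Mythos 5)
Your proof is correct, and it takes a genuinely different route from the paper. The paper's proof is essentially a two-line citation: it applies Theorem~\ref{thm:double} to obtain the parameters of $\mathfrak D(C)$ and then invokes Theorem~2 of Ref.~\cite{berthusen2025} (a general result on concatenating an Iceberg code with a given outer code) to conclude. You instead establish the qubit count, encoded-qubit count, CSS property, and both distance inequalities from first principles: the upper bound by lifting a minimal-weight logical of the seed code $C$ through the double cover and then through the $\tau$-informed $C_4$ assignment to exhibit an explicit weight-$2d$ logical of the concatenated code, and the lower bound by the standard block-projection argument with the needed care about the distance-$2$ inner code and the fact that two outer qubits share each $C_4$ block. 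Both halves of your argument are sound; the symplectic-orthogonality check for $(a\,|\,b\,|\,0\,|\,0)$, the non-stabilizer check, and the handling of blocks where $\bar P_b=I$ are all correctly done.

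It is worth noting that your direct route is slightly stronger than the paper's. Theorem~\ref{thm:double} only guarantees $d(\mathfrak D(C)) \ge d$, so plugging $d_2 = d(\mathfrak D(C))$ into Theorem~2 of Ref.~\cite{berthusen2025} as the paper does yields $D \le 2\,d(\mathfrak D(C))$, which need not equal $2d$ if the double's distance strictly exceeds $d$. Your construction of a weight-$2d$ logical operator, starting directly from a weight-$d$ logical of $C$ rather than from a minimal-weight logical of $\mathfrak D(C)$, gives the stated bound $D\le 2d$ cleanly, and incidentally explains why every entry of Table~\ref{tab:codes} saturates the upper bound.
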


\begin{proof}
    From Theorem~\ref{thm:double}, we have that the symplectic double code has parameters $[[2n,2k,\ge d]]$. We then apply Theorem 2 of Ref.~\cite{berthusen2025} which tells us that the concatenation of an $[[n_1, n_1 - 2, 2]]$ Iceberg code and an $[[n_2, k_2, d_2]]$ stabilizer code yields a concatenated code with parameters $[[n_1n_2/(n_1-2), k_2, 2d_2 \ge d \ge d_2]]$. The $C_4$ code is the smallest instance of an Iceberg code, and so we obtain a $[[4n, 2k, 2d \ge D \ge d]]$ code, as required.
\end{proof}

Table~\ref{tab:codes} presents examples of concatenated symplectic double codes, $C_4 \otimes_\tau \mathfrak{D}(C)$ along with the non-concatenated version and the seed non-CSS code. The parity check matrices for the seed codes used in this work are given explicitly in Appendix~\ref{apx:seed_codes}. 
From these and the $ZX$-duality $\tau$, the parity check matrices of the symplectic double and resulting concatenation can be obtained. In Table~\ref{tab:codes} we additionally report the maximum check weight, $q_{\max}$, of $C_4 \otimes_\tau \mathfrak{D}(C)$ and the number of SWAP-transversal gates, see Section~\ref{sec:log_clifford}. 

We make the interesting note that certain instances of the many-hypercubes code (MHC)~\cite{Goto_2024} can be obtained through the CSD construction method. The $[[16,4,4]]$ CSD code obtained by using the seed non-CSS $[[4,2,2]]$ code, Eq.~\eqref{eq:noncss422}, is isomorphic to the $L=2$ $C_4$ MHC code. Higher $L$'s can also be obtained: we first perform a Hadamard transform~\cite{Bonilla_Ataides_2021} on the $[[16,4,4]]$ CSD code to yield a non-CSS code with the same parameters. Plugging this new code back through the CSD construction yields a $[[64,8,8]]$ code that is isomorphic to the $L=3$ $C_4$ MHC code. Although unstudied in the original paper, it is possible to construct a MHC code where the different concatenation levels are different Iceberg codes. Such MHC codes are also recoverable through the CSD framework, as long as the lowest level code is the $C_4$ ($[[4,2,2]]$) Iceberg code. Interpreting MHC codes though the CSD framework gives an alternative method of finding valid logical operators.

\renewcommand{\arraystretch}{1.25}
\setlength{\tabcolsep}{8pt}

\begin{table}[t]
\centering
\begin{tabular}{|c|c|>{\columncolor[RGB]{230, 242, 255}}c|c|c|}
\hline
$C$  & $\mathfrak{D}(C)$ & $C_4 \otimes_\tau \mathfrak{D}(C)$& $q_{\max}$ & $|G_{\tau}|$                   \\
\hline\hline
$[[4,2,2]]$ & $[[8,4,2]]$ & $[[16,4,4]]$ & 8 & 216 \\
\hline
$[[5,1,3]]$~\cite{laflamme1996} & $[[10,2,3]]$ & $[[20,2,6]]$ & 8 & 18 \\
\hline
$[[6,2,2]]$~\cite{burton2024}   & $[[12,4,2]]$   & $[[24,4,4]]$ & 8 & 72 \\
\hline
$[[8,3,3]]$~\cite{Gottesman_1996, Calderbank_1997} & $[[16,6,3]]$ & $[[32,6,6]]$ & 16 & 1008 \\
\hline
$[[12,4,4]]$ & $[[24,8,4]]$ &  $[[48,8,8]]$ & 16 & 2160 \\
\hline
$[[15,3,5]]$~\cite{kanomata2025} & $[[30,6,5]]$ & $[[60,6,10]]$ & 16 & 216 \\
\hline
$[[18,6,5]]$ & $[[36,12,5]]$ & $[[72,12,10]]$ & 20 & 648 \\
\hline
$[[20,4,6]]$ & $[[40,8,6]]$ & $[[80,8,12]]$ & 20 & 90 \\
\hline
$[[24,6,6]]$ & $[[48,12,6]]$ & $[[96,12,12]]$ &24& 72 \\
\hline
\end{tabular}
\caption{\small Examples of concatenated symplectic double codes. Code distances were computed using the QDistRnd GAP package~\cite{Pryadko_2022}, with 1000 information sets and \texttt{mindist = 0} to obtain the actual distance. All distances match the upper bound of Theorem~\ref{thm:csd_params}. $q_{\max}$ is the maximum stabilizer weight, and $|G_\tau|$ is the number of gates implementable SWAP-transversally, see Section~\ref{sec:log_clifford}. The $[[16,4,4]]$ code is isomorphic to the $L=2$ $C_4$ many-hypercubes code~\cite{Goto_2024}.}
\label{tab:codes}
\end{table}

\section{Fault-tolerant gadgets}
\label{sec:gadgets}

\subsection{Logical Clifford circuits}
\label{sec:log_clifford}

For logical gates, and specifically inter-block logical gates, transversal gates are the gold standard as they prevent the propagation of errors. 
For certain quantum computing modalities, such as neutral atoms or ion-traps, it makes sense to relax the notion of transversality to include SWAP gates, as qubit movement on these devices is low-cost and qubit relabeling can be done in software.
Relaxing further, the notion of fold-transversality has been introduced~\cite{moussa2016, Quintavalle_2023, Breuckmann_2024}. Here, some logical gates are defined with respect to a ZX-duality, $\tau$. A unitary $U$ is then considered fold-transversal if it consists of single-qubit gates and two-qubit gates supported on the orbits $\{(i, \tau(i))\},{i=1,...,n}$ of $\tau$. In general, it is desirable to have logical gates with as strict a notion of transversality as possible, i.e. transversal $\ge$ SWAP-transversal $>$ fold-transversal $\approx$ non-transversal, since less work is required to make them fault-tolerant.

The symplectic double $\mathfrak{D}$ lifts to a homomorphism $\mathfrak{D}': \textrm{Sp}(2n, \mathbb{F}_2) \rightarrow \textrm{Cliff}(2n)$. In other words, logical gates on $C$ get lifted to logical gates on $\mathfrak{D}(C)$ according to $\mathfrak{D}'$. The physical operations are lifted according to Theorem 3.7 of Ref.~\cite{burton2024}, and the mappings are shown in Fig.~\ref{fig:mapping}. We note that there is also a lifted version of a CNOT gate; however, we do not make use of such lifted gates in this work. 
Since CNOT and SWAP gates can only take $X$-stabilizers (logicals) to linear combinations of $X$-stabilizers (logicals)---and analogously for $Z$-type operators---it can be seen that all logical gates lifted from $C$ enact some CNOT-type logical operation on $\mathfrak{D}(C)$. Furthermore, it can be seen that lifted logical gates are fold-transversal, as the only physical CNOT gates are applied across the $ZX$-duality described by $\tau$. $\mathfrak{D}'$ informs us of the corresponding physical circuit to apply on $\mathfrak{D}(C)$, with which we can then check the explicit logical action on the symplectic space. The gates on $C$ that we are interested in correspond to automorphisms of the code, which we identify using GAP~\cite{GAP4} and open-source software~\cite{autqec_code} developed in Ref.~\cite{sayginel2024fault}. While the strict definition of automorphism or permutation gates only allows for qubit relabeling~\cite{berthusen2025_2}, we loosen this definition to include logical gates consisting of physical single-qubit gates followed by (a potentially trivial) relabeling of the qubits. This is the definition considered in Ref.~\cite{sayginel2024fault} and is equivalent to the notion of SWAP-transversality. Throughout the paper, we use these two terms interchangeably. As such, transversal intra-block gates and gates such as Hadamard-type fold-transversal gates~\cite{Breuckmann_2024} below are included in our definition of an automorphism gate. 

\begin{figure}
    \centering
    \includegraphics[width=0.35\linewidth]{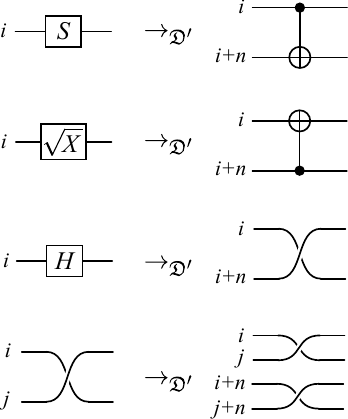}
    \caption{Using $\mathfrak{D}'$ to lift physical operations on $C$ to physical operations on $\mathfrak{D}(C)$. Only CNOT-type logical gates are obtainable through this lifting process.}
    \label{fig:mapping}
\end{figure}

While these are the only logical gates inherited from the automorphism gates of $C$, $\mathfrak{D}(C)$ acquires additional logical operators as a result of the double cover. In particular, the $ZX$-duality $\tau$ enables a Hadamard-type gate:
\begin{equation}
    \overline{H}_{\tau} = \bigotimes_{i=1}^{2n} H_i ~ \bigotimes_{\substack{i=1,...,2n \\ i < \tau(i)}} \text{SWAP}_{i,\tau(i)}.
    \label{eq:hswap}
\end{equation}
This can easily be verified to be a valid logical operator for $\mathfrak{D}(C)$ by considering Eq.~\eqref{eq:pcm_double} and recalling that the $ZX$-duality $\tau$ has the form $\{i, i+n\}_{i=1, ..., n}$. Finally, from Theorem 7 of Ref.~\cite{Breuckmann_2024}, we can say that $\mathfrak{D}(C)$ will have a transversal CZ gate of the form:
\begin{equation}
    \overline{S}_{\tau} = \bigotimes_{\substack{i=1,...,2n \\ i < \tau(i)}} CZ_{i, \tau(i)}.
    \label{eq:s}
\end{equation}
Note that in the most general description of a phase-type fold-transversal gate, additional physical $S$ and $S^\dagger$ gates are applied. These physical gates are not needed for $\mathfrak{D}(C)$ codes as the $ZX$-duality $\tau$ has no invariant qubits.

As symplectic matrices (which can also be thought of as Clifford tableaus describing the action of the logical $X$ and $Z$ operators), the logical actions of these operators take the general form:
\renewcommand\arraystretch{1.25}
\begin{equation}
    \mathfrak{D}'(\overline{L}_i) = 
    \begin{pmatrix}
    \begin{array}{c|c}
        M & 0\\
        \hline
        0 & (M^{-1})^\top  \\
    \end{array}
    \end{pmatrix}, \quad
    \overline{H}_\tau = 
    \begin{pmatrix}
    \begin{array}{c|c}
        0 & M \\
        \hline
        (M^{-1})^\top & 0 \\
    \end{array}
    \end{pmatrix}, \quad 
    \overline{S}_\tau = 
    \begin{pmatrix}
    \begin{array}{c|c}
        I & S \\
        \hline
        0 & I \\
    \end{array}
    \end{pmatrix}, 
    \label{eq:gates}
\end{equation}
where $\overline{L}_i$ is a logical lifted from $C$, $M$ is a $k \times k$ invertible binary matrix, and $S$ is an $k \times k$ symmetric binary matrix with a zero diagonal, $S_{ii} = 0$. Regardless of the choice of seed non-CSS code $C$ and number of logical gates $\overline{L}_i$ lifted from $C$ to $\mathfrak{D}(C)$, the resulting gateset $G_\tau = \langle \mathfrak{D}'(\overline{L}_i), \overline{H}_\tau, \overline{S}_\tau \rangle$ is not powerful enough to generate the phase-free Clifford group on $2k$ qubits, i.e. $G_\tau \ncong \mathrm{Sp}_{4k}(\mathbb{F}_2)$. We display the size of $G_\tau$ for the codes in Table~\ref{tab:codes}, as calculated with GAP~\cite{GAP4}.
The reason for this insufficiency is due to the structure of $\overline{S}_\tau$: since $S_{ii} = 0$, we are not able to transform $\overline{X}_i \rightarrow \overline{Y}_i$. 

To supplement this, we need additional phase gates. While certain code instances may have fault-tolerant implementations of logical $S$ gates, we can implement logical phase gates in general by employing a state injection scheme~\cite{Zhou2000}. We can use the circuit below or similar:
\begin{equation}
\begin{quantikz}[background color=white, row sep={1cm,between origins}]
\lstick{$\overline{S}^{\otimes 2k} \ket{\overline{+}}^{\otimes 2k}$} & \qwbundle{2n} & \ctrl{1} &  \gate{\overline{Y}} \wire[d][1]{c} & \rstick{$\overline{S}^{\otimes 2k} \ket{\overline\psi}$}\\
\lstick{$\ket{\overline\psi}$} & \qwbundle{2n} & \targ{}  &  \meter{}           
\end{quantikz} \approx 
\begin{quantikz}[background color=white, row sep={1cm,between origins}]
\lstick{$\ket{\overline{\psi}}$} & \gate[2]{\overline{S}} & \rstick{$\overline{S}^{\otimes 2k} \ket{\overline\psi}$} \\
& 
\end{quantikz}
\label{eq:s_tele}
\end{equation}

Performing gates in this way requires additional overhead in the form of ancilla blocks prepared in the logical $S\ket{+}$ state; however, our reason for this choice will be made clear in the following section as well as Section~\ref{sec:qec}. With this gate, we now have an expanded logical gateset $G = \langle\mathfrak{D}
'(\overline{L}_i), \overline{H}_\tau, \overline{S}_\tau, \overline{S}=\overline{S}^{\otimes 2k} \rangle$ which, for certain choices of $C$, is sufficiently powerful to generate the phase-free Clifford group on all $2k$ qubits, $G \cong \mathrm{Sp}_{4k}(\mathbb{F}_2)$. The codes presented in Table~\ref{tab:codes} with up to eight logical qubits were verified using GAP~\cite{GAP4} to have this property. Note that for some codes it is the case that targeted phase gates, as opposed to a global phase gate, are required for $G$ to generate the full symplectic group. Such is true for the $[[8,4,2]]$ ($[[16,4,4]]$) CSD code, where a global phase gate results in $G \cong (A_8 \times A_8)\rtimes (C_2 \times C_2) \subset \textrm{Sp}_{8}(\mathbb{F}_2)$, while a targeted phase gate on any logical qubit generates the full symplectic group $\textrm{Sp}_8(\mathbb{F}_2)$. Fortunately, targeted phase gates are easier to apply for our presented method, see Section~\ref{sec:y_prep}. 

All of the aforementioned logical operators of $\mathfrak{D}(C)$ remain logical operators of $C_4 \otimes_\tau \mathfrak{D}(C)$ and have the same logical action; however, some of the are now \textit{upgraded} in terms of fault-tolerance. 
In particular, the concatenation structure of $C_4 \otimes_\tau \mathfrak{D}(C)$ allows us to upgrade all fold-transversal logical gates on $\mathfrak{D}(C)$ to be SWAP-transversal on the concatenated code.

\begin{thm}
    All logical gates from $G_\tau$ are SWAP-transversal on $C_4 \otimes_\tau \mathfrak{D}(C)$
    \label{thm:transversal}
\end{thm}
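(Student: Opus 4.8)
The plan is to reduce the statement to a finite check on the inner $C_4$ code. The set of SWAP-transversal operations on any fixed code is closed under composition --- the composite of two operations, each given by physical single-qubit Cliffords followed by a qubit relabeling, is again of that form --- so it is enough to show that each generator of $G_\tau$, namely $\overline{H}_\tau$, $\overline{S}_\tau$, and the lifted gates $\mathfrak{D}'(\overline{L}_i)$, is SWAP-transversal on $C_4\otimes_\tau\mathfrak{D}(C)$. First I would recall the fold-transversal implementations of these generators on $\mathfrak{D}(C)$: $\overline{H}_\tau$ and $\overline{S}_\tau$ are written out in Eqs.~\eqref{eq:hswap} and \eqref{eq:s} as physical single-qubit gates together with SWAP and $CZ$ gates on the orbits $\{i,\tau(i)\}$, and each $\mathfrak{D}'(\overline{L}_i)$ is realized (Theorem~3.7 of Ref.~\cite{burton2024}, Fig.~\ref{fig:mapping}) by physical Paulis, permutations, and two-qubit gates supported only on the orbits of $\tau$. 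The structural point that does the work is that, by construction of $C_4\otimes_\tau\mathfrak{D}(C)$, the orbit $\{i,i+n\}$ of $\tau$ is exactly the pair of inner logical qubits living in one common $C_4$ block. Hence any fold-transversal circuit on $\mathfrak{D}(C)$ localizes to the $C_4$ blocks: it is a permutation of whole $C_4$ blocks (the orbit-permuting part) composed with, on each block $b$, a two-qubit logical Clifford $V_b$ acting on that block's two inner logical qubits.

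The core of the argument is then that every such $V_b$ is SWAP-transversal on a single $C_4$ code. Reading off the generators, the $V_b$'s lie in the subgroup of the logical Clifford group of $C_4$ generated by $\overline{H}_1\overline{H}_2\,\overline{\text{SWAP}}_{12}$ (from $\overline{H}_\tau$), by $\overline{CZ}_{12}$ (from $\overline{S}_\tau$), by $\overline{\text{CNOT}}_{1\to2}$, $\overline{\text{CNOT}}_{2\to1}$, $\overline{\text{SWAP}}_{12}$ (from the orbit-supported parts of the lifted gates), and by the logical Paulis. I would verify each generator directly against the chosen $C_4$ logical operators $\overline{X}_1 = XXII$, $\overline{Z}_1 = ZIZI$, $\overline{X}_2 = XIXI$, $\overline{Z}_2 = ZZII$ and stabilizers $XXXX$, $ZZZZ$: one finds that $\overline{H}_1\overline{H}_2\,\overline{\text{SWAP}}_{12}$ is implemented by the transversal Hadamard $H^{\otimes4}$; $\overline{\text{SWAP}}_{12}$ by the physical swap of qubits $2$ and $3$; $\overline{\text{CNOT}}_{1\to2}$ by the physical swap of qubits $1$ and $3$, and $\overline{\text{CNOT}}_{2\to1}$ by the physical swap of qubits $1$ and $2$; $\overline{CZ}_{12}$ by $S^{\otimes4}$ up to a physical Pauli correction; and logical Paulis by physical Paulis (e.g. $\overline{X}_1 = X_1X_2$). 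Each of these is a physical single-qubit Clifford layer followed by a permutation, so by closure under composition every $V_b$ is SWAP-transversal on $C_4$.

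Finally I would assemble the pieces. Given $g\in G_\tau$, take its fold-transversal implementation on $\mathfrak{D}(C)$, replace each block operation $V_b$ by the SWAP-transversal $C_4$ circuit found above, and keep the block-permuting part as a relabeling of whole $C_4$ blocks. The per-block single-qubit Cliffords then collect into a global single-qubit Clifford layer on the $4n$ physical qubits, while the per-block physical swaps merge with the block permutation into one global permutation of the $4n$ physical qubits; the result is by definition a SWAP-transversal circuit, and since the logical action on $C_4\otimes_\tau\mathfrak{D}(C)$ is determined by the induced action on the outer code $\mathfrak{D}(C)$, this circuit realizes $g$. The step I expect to be the main obstacle is the bookkeeping in the first paragraph --- extracting from Theorem~3.7 of Ref.~\cite{burton2024} the precise claim that each lifted gate $\mathfrak{D}'(\overline{L}_i)$ is built only from physical Paulis, orbit-supported two-qubit gates, and orbit-preserving permutations, so that it genuinely factors through the $C_4$ blocks with $V_b$ in the subgroup above; the $C_4$-level verifications in the second paragraph are then routine finite computations.
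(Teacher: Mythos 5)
Your proposal is correct and follows essentially the same approach as the paper's own proof: reduce to the generators $\overline{H}_\tau$, $\overline{S}_\tau$, and $\mathfrak{D}'(\overline{L}_i)$; observe that the fold-transversal implementation on $\mathfrak{D}(C)$ is built from single-qubit gates, two-qubit gates on the orbits $\{i,i+n\}$, and orbit-preserving permutations; use the fact that each orbit is exactly one $C_4$ block to localize; and then verify finitely many two-qubit logical Clifford gates on a single $C_4$ block (logical CNOTs, SWAP, $\overline{H}_1\overline{H}_2\overline{\mathrm{SWAP}}_{12}$, $\overline{CZ}_{12}$) are themselves SWAP-transversal. Your presentation is slightly more modular — stating closure under composition up front and factoring the argument through a clean per-block lemma — and your particular $C_4$ implementations (e.g.\ $S^{\otimes 4}$ up to a Pauli rather than $S_1^\dagger S_2 S_3 S_4^\dagger$) differ cosmetically from the paper's choices, but the structure of the argument and the step the paper treats as subtle (that the lifted cross-orbit SWAPs only permute whole $C_4$ blocks and hence respect $\tau$) are both present.
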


\begin{proof}
    For our particular concatenation structure, the logical qubits of the $C_4$ codeblocks can be thought of as the `physical' qubits of the $\mathcal{D}(C)$ code. As such, implementing some logical operator in $G_\tau$ on $C_4 \otimes_\tau \mathfrak{D}(C)$ is done by applying the corresponding circuit, (e.g. Fig.~\ref{fig:mapping}, Eq.~\eqref{eq:hswap}, or Eq.~\eqref{eq:s}) to these `physical' qubits. This requires rewriting each physical operation to act on the logical qubits of the $C_4$ codeblocks. The resulting logical action on $C_4 \otimes_\tau \mathfrak{D}(C)$ is the same as on $\mathfrak{D}(C)$.

    We first look at the gates lifted from $C$, $\mathfrak{D}(\overline{L}_i)$. From Fig.~\ref{fig:mapping}, it can be seen that the lifted gates are either fold-transversal or SWAP-transversal on $\mathfrak{D}(C)$, as any CNOT gates are applied between qubits that are mapped to each other according to the $ZX$-duality $\tau$, i.e. qubits $i$ and $n+i$. In the $C_4 \otimes_\tau \mathfrak{D}(C)$ code, the corresponding logical qubits are assigned to the same $C_4$ codeblock, and so applying a CNOT between the logical qubits can be accomplished as $\overline{CNOT}_{1,2} = SWAP_{2,4}$ or $\overline{CNOT}_{2,1} = SWAP_{3,4}$. Hence any logical gate consisting of fold-transversal CNOT gates on $\mathfrak{D}(C)$ is SWAP-transversal on $C_4 \otimes_\tau \mathfrak{D}(C)$.
    
    Ensuring that the permutation (automorphism) gates~\cite{Grassl_2013} of $\mathfrak{D}(C)$ remain SWAP-transversal on $C_4 \otimes_\tau \mathfrak{D}(C)$ is more subtle. 
    In particular, we are concerned about SWAP operations that make it so logical qubits $i$ and $n+i$ are in different $C_4$ codeblocks post permutation, potentially removing the ability to perform gates from $G_\tau$ SWAP-transversally. 
    In this scenario, the logical qubits have to be physically swapped between $C_4$ codeblocks to restore the appropriate logical qubit-$C_4$ codeblock assignments, see Appendix C.8 of Ref.~\cite{berthusen2025}. 
    Any SWAP gates on $\mathfrak{D}(C)$ lifted from $C$ respect $\tau$ in the sense that logical qubits $i$ and $n+i$ remain in the same $C_4$ codeblocks after the permutation: SWAP$_{i, n+i}$ keeps the logical qubits within their original $C_4$ codeblocks, and can be implemented on the concatenated code using the SWAP-transversal gate $\overline{\text{SWAP}}_{1,2} = \text{SWAP}_{2,3}$. 
    The lifted version of a SWAP gate, SWAP$_{i,j}$ SWAP$_{i+n, j+n}$, essentially swaps the labels of the $C_4$ codeblocks themselves, which can be tracked in software. Alternatively, the four physical qubits of each codeblock can be swapped. The fact that the same permutation is done across the fold ensures that logical qubits $(i, i+n)$ and $(j, j+n)$ remain in the same $C_4$ codeblocks, respectively. 
    
    We now look at implementing Eq.~\eqref{eq:hswap}, $\overline{H}_\tau = \bigotimes_{i=1}^{2n} H_i ~\bigotimes_{i=1}^n SWAP_{i, n+i}$ on the $C_4 \otimes_\tau \mathfrak{D}(C)$ code. 
    Implementing $\overline{H}_1 \overline{H}_2$ for both $C_4$ code can be done by applying $H_1H_2H_3H_4 \text{SWAP}_{2,3}$. Applying SWAP$_{i, n+i}$ is similarly straightforward: since we have ensured that logical qubits $i$ and $n+i$ are assigned to the same $C_4$ block, we can apply $\overline{\text{SWAP}}_{1,2} = \text{SWAP}_{2,3}$ on each $C_4$ block. Notice that the implementation for $\overline{H}_1 \overline{H}_2$ implicitly has the desired logical SWAP, which we undo. 

    It remains to show that Eq.~\eqref{eq:s}, $\overline{S}_\tau = \bigotimes_{i=1}^n\overline{CZ}_{i,n+i}$ is SWAP-transversal in the concatenated code. Based on the $ZX$-duality $\tau$, we ensured that qubits $i$ and $n+i$ reside in the same $C_4$ block. The $C_4$ code has the intrablock logical gate $\overline{CZ} = S_1^\dagger S_2 S_3 S_4^\dagger$. Applying this gate on all $C_4$ codeblocks enacts the desired logical operation. This result can also be derived from the fact that $C_4 \otimes \mathfrak{D}(C)$ is doubly-even and self-dual, implying that it has this transversal phase gate. However, the logical action is a $CZ$-type circuit, and so a phase gate, Eq.~\eqref{eq:s_tele}, is required for Clifford completeness. 
    Hence, we have showed that $\mathfrak{D}(\overline{L}_i)$, $\overline{H}_{\tau}$, and $\overline{S}_\tau$ are all SWAP-transversal on the CSD codes.
\end{proof}

\begin{figure}[t]
    \centering
    \includegraphics[width=0.5\linewidth]{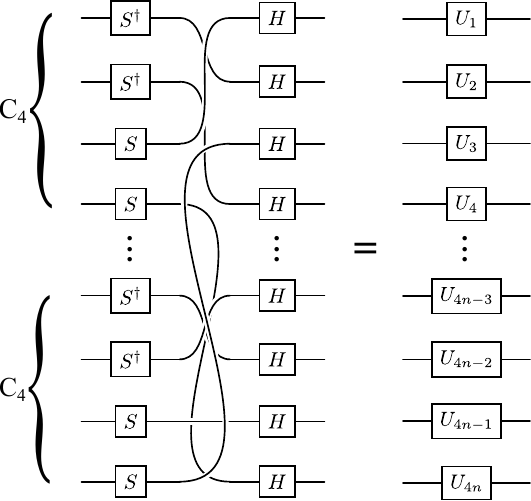}
    \caption{\small An example physical circuit implementing the $\overline{S}_\tau$, permutation, and $\overline{H}_\tau$ (left to right) gates. The entire circuit can be ``untangled'' and compressed to a layer of single-qubit Clifford gates, $U_1  \otimes ...\otimes U_{4n}$, $U_i \in \mathcal{C}_1$.}
    \label{fig:circuit_simplified}
\end{figure}

Gates from $G_\tau$ implemented on $C_4 \otimes_\tau \mathfrak{D}(C)$ generate the same group as when implemented on $\mathfrak{D}(C)$, and so an additional logical phase gate is still required. On the concatenated code, this can also be accomplished using gate teleportation, Eq.~\eqref{eq:s_tele}. It may be the case that $\mathfrak{D}(C)$ has its own automorphisms that give rise to additional logical gates; however, there is no guarantee they would be SWAP-transversal and respect $\tau$. Furthermore, they too can only implement CNOT-type logical circuits~\cite{berthusen2025_2}, and so they would not substantially increase the power of $G_\tau$.
An interesting consequence of Theorem~\ref{thm:transversal} is the fact that any logical operation from $G_\tau$ can be implemented using only physical single-qubit gates and qubit relabeling.
Fig.~\ref{fig:circuit_simplified} shows an example of a physical implementation of $\overline{H}_\tau$, $\overline{S}_\tau$, and automorphism gates. One can imagine ``untangling'' the SWAP gates in the circuit to obtain an equivalent circuit where the wires do not cross. In this form, it is easy to see that the circuit for each physical qubit is independent and consists of gates from the single-qubit Clifford group $\mathcal{C}_1 = \langle H, S\rangle$. Indeed, regardless of how many single-qubit gates are applied, they can be compressed down into a single Clifford gate $U_i \in \mathcal{C}_1$. In effect, we have shown that an arbitrary length Clifford circuit using only gates from $G_\tau$ can be simplified to a single layer of physical single-qubit Clifford gates  and qubit relabeling, which can be tracked in software. 

However, the number of achievable gates solely from $G_\tau$ is small compared to the size of $\textrm{Sp}_{4k}(\mathbb{F}_2)$, and as such injected logical $S$ gates will be required to achieve most Clifford circuits. The two operations can be combined in a way that yields conceptually simple physical implementations of logical Clifford circuits. In particular, an arbitrary $2k$-logical qubit Clifford operator $\overline{U}$ will have some decomposition into the generators of $G_\tau$ and the injected logical $S$ gate. Since any element of $G_\tau$ can be compiled into a single layer of single-qubit gates, the resulting physical circuit simply alternates between injecting logical $S$ gates and applying a layer of single-qubit gates.

\begin{equation}
\begin{quantikz}[background color=white, row sep={1cm,between origins}]
\lstick{$\ket{\overline{\psi}}$} & \gate{U_i^{(1)}} & \gate[2]{\overline{S}} & \gate{U_i^{(2)}} & \gate[2]{\overline{S}} & \gate{U_i^{(3)}} & \\
&\wireoverride{n} & & \wireoverride{n} & & \wireoverride{n} & \wireoverride{n}   
\end{quantikz} ...
\begin{quantikz}[background color=white, row sep={1cm,between origins}]
& \gate[2]{\overline{S}} & \gate{U_i^{(m)}} & \rstick{$\overline{U} \ket{\overline\psi}$} \\
& &\wireoverride{n}    
\end{quantikz}
\label{eq:s_tele_circuit}
\end{equation}
Moreover, the final step of state injection is a Pauli correction. The single-qubit gates could be folded into this Pauli correction to obtain a physical circuit that essentially only consists of logical $S$ gate injections.

While we have the full Clifford group on a single $C_4 \otimes_\tau \mathfrak{D}(C)$ codeblock, large algorithms will require logical quantum circuits interacting logical qubits across many codeblocks. Here we show that the full Clifford group on $b$ codeblocks each containing $2k$ qubits is achievable using transversal gates and injected $S$ gates. Being CSS codes, CSD codes have transversal CNOT gates $\bigotimes_{i=1}^{2k} \overline{CNOT}_{i,2k+i} = \bigotimes_{i=1}^{4n} CNOT_{i, 4n+i}$ (and another swapping the controls and targets). Hence using gates from $G$ on each codeblock in addition to the transversal CNOT gates between blocks yields the following group of achievable gates: 

\renewcommand\arraystretch{1}
\begin{equation}
G^{\otimes 2} = \Biggr \langle
\begin{pmatrix}
\begin{array}{cc|cc}
I_k & I_k & 0 & 0 \\
0 & I_k & 0 & 0 \\
\hline
0 & 0 & I_k & 0 \\
0 & 0 & I_k & I_k
\end{array}
\end{pmatrix},
\begin{pmatrix}
\begin{array}{cc|cc}
I_k & 0 & 0 & 0 \\
I_k & I_k & 0 & 0 \\
\hline
0 & 0 & I_k & I_k \\
0 & 0 & 0 & I_k
\end{array}
\end{pmatrix},
\begin{pmatrix}
\begin{array}{cc|cc}
A_1 & 0 & B_1 & 0 \\
0 & 0 & 0 & 0 \\
\hline
C_1 & 0 & D_1 & 0 \\
0 & 0 & 0 & 0
\end{array}
\end{pmatrix},
\begin{pmatrix}
\begin{array}{cc|cc}
0 & 0 & 0 & 0 \\
0 & A_2 & 0 & B_2 \\
\hline
0 & 0 & 0 & 0 \\
0 & C_2 & 0 & D_2 
\end{array}
\end{pmatrix} \Biggr \rangle
\end{equation}
where
\begin{equation}
\begin{pmatrix}
\begin{array}{c|c}
A_1 & B_1 \\
\hline
C_1 & D_1 
\end{array}
\end{pmatrix},
\begin{pmatrix}
\begin{array}{c|c}
A_2 & B_2 \\
\hline
C_2 & D_2 
\end{array}
\end{pmatrix} \in \mathrm{Sp}_{2k}(\mathbb{F}_2).
\end{equation}

\begin{thm}
    If $G \cong \textrm{Sp}_{4k}(\mathbb{F}_2)$, then $G^{\otimes 2} \cong \textrm{Sp}_{8k}(\mathbb{F}_2)$.
    \label{thm:two_block}
\end{thm}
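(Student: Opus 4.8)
Work in the symplectic picture: let $V=V_1\oplus V_2$ be the $8k$-dimensional $\mathbb{F}_2$-vector space carrying the $4k$ logical qubits of the two codeblocks, with nondegenerate alternating form $\omega$, where $V_j\cong\mathbb{F}_2^{4k}$ is the subspace of block $j$ and $V_1\perp V_2$. For $v\in V$ let $T_v\in\mathrm{Sp}(V)$ denote the symplectic transvection $T_v(x)=x+\omega(x,v)\,v$. Under this identification $G^{\otimes 2}\le\mathrm{Sp}(V)=\mathrm{Sp}_{8k}(\mathbb{F}_2)$ is generated by $G_1$, $G_2$, and the transversal $\overline{CNOT}$, where $G_j$ is $G$ implemented on block $j$ (acting as the identity on the other block), so $G_1=\mathrm{Sp}(V_1)$ and $G_2=\mathrm{Sp}(V_2)$ by hypothesis. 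The plan is to show that $G^{\otimes 2}$ contains \emph{every} transvection of $V$; since $\mathrm{Sp}_{8k}(\mathbb{F}_2)$ is generated by its transvections, this is equivalent to $G^{\otimes 2}=\mathrm{Sp}_{8k}(\mathbb{F}_2)$.

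First I would record the conjugation rule $M T_v M^{-1}=T_{Mv}$ for $M\in\mathrm{Sp}(V)$, and note that $T_a\in\mathrm{Sp}(V_1)=G_1$ for every $a\in V_1\setminus\{0\}$ (and $T_b\in G_2$ for $b\in V_2\setminus\{0\}$), since $V_1\perp V_2$ forces such a transvection to fix the other block pointwise. The crux is to produce a single transvection mixing the blocks. Writing $z_i$ for the symplectic vector of the logical $\overline{Z}_i$ operator, the single-qubit phase gate $\overline{S}_{2k+1}$ on the first logical qubit of block $2$ equals $T_{z_{2k+1}}\in G_2$, while the transversal $\overline{CNOT}$ with controls in block $1$ acts as the symplectic map $K$ sending $z_{2k+1}\mapsto z_1+z_{2k+1}$ and fixing $z_1$. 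Hence
\[
K\,T_{z_{2k+1}}\,K^{-1}=T_{K z_{2k+1}}=T_{z_1+z_{2k+1}}\in G^{\otimes 2},
\]
a transvection whose vector $u=z_1+z_{2k+1}$ has nonzero component in both $V_1$ and $V_2$. Next, $\mathrm{Sp}(V_1)$ acts transitively on $V_1\setminus\{0\}$ and $\mathrm{Sp}(V_2)$ on $V_2\setminus\{0\}$, and $G_1,G_2$ commute (each fixing the other block pointwise); choosing $g_1\in G_1$ with $g_1 z_1=a$ and $g_2\in G_2$ with $g_2 z_{2k+1}=b$ yields $(g_1 g_2)\,T_u\,(g_1 g_2)^{-1}=T_{a+b}\in G^{\otimes 2}$ for all $a\in V_1\setminus\{0\}$ and $b\in V_2\setminus\{0\}$. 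Combined with the within-block transvections $T_a\in G_1$ and $T_b\in G_2$, a case split on whether $a$ or $b$ vanishes shows that $T_v\in G^{\otimes 2}$ for every nonzero $v=a+b\in V$, and the claim follows.

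I expect the only non-routine step to be manufacturing the inter-block transvection: the transversal $\overline{CNOT}$ is a product of $2k$ commuting transvections, not a single one, so it cannot serve directly as a rank-one block-mixing generator, and the trick is that conjugating an \emph{already available} single-block transvection by it collapses back to a single, now inter-block, transvection. Obtaining all transvections (rather than merely a spanning, connected configuration) also neatly sidesteps the familiar $\mathbb{F}_2$ subtlety that a transvection subgroup can land inside an orthogonal group $O^{\pm}_{8k}(\mathbb{F}_2)$. A more structural alternative would be to observe that $\langle G_1,G_2,\overline{\mathrm{SWAP}}\rangle$, with the transversal swap built from two transversal $\overline{CNOT}$s, is exactly the stabilizer $\mathrm{Sp}_{4k}(\mathbb{F}_2)\wr S_2$ of the decomposition $V_1\perp V_2$, which is maximal in $\mathrm{Sp}_{8k}(\mathbb{F}_2)$ for $2k\ge 2$, and that $\overline{CNOT}$ does not preserve this decomposition; but I would favor the transvection argument since it is self-contained and avoids invoking the classification of maximal subgroups of the symplectic groups.
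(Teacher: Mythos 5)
Your proof is correct, and it takes a genuinely different route from the paper. The paper's argument first restricts attention to CNOT circuits (block-triangular symplectic matrices), manufactures $\bigl(\begin{smallmatrix} I & A \\ 0 & I \end{smallmatrix}\bigr)$ for invertible $A$ by conjugating the transversal $\overline{CNOT}$ by a single-block element, exploits the additive composition rule on off-diagonal blocks, and then invokes two external results — Corollary IX.8 of the SHYPS paper (to reach arbitrary, not merely invertible, off-diagonal blocks $A$) and Theorem~4 of Grassl (to conclude the upper- and lower-triangular families generate all of $\mathrm{GL}$); it closes with the standard fact that CNOTs plus single-qubit Cliffords generate the Clifford group. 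You instead work directly in $\mathrm{Sp}(V_1 \oplus V_2)$ and show $G^{\otimes 2}$ contains every symplectic transvection: within-block transvections $T_a$, $T_b$ come for free from $G_1 = \mathrm{Sp}(V_1)$ and $G_2 = \mathrm{Sp}(V_2)$; a single cross-block transvection $T_{z_1 + z_{2k+1}}$ is obtained by conjugating $\overline{S}_{2k+1} = T_{z_{2k+1}}$ by the transversal $\overline{CNOT}$ (the key observation that conjugating a rank-one transvection by a product of commuting transvections collapses back to a rank-one transvection); and transitivity of $\mathrm{Sp}(V_j)$ on $V_j \setminus \{0\}$ then yields all $T_{a+b}$ with $a \neq 0 \neq b$. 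Since $\mathrm{Sp}_{8k}(\mathbb{F}_2)$ is generated by its transvections, and you obtain every transvection (correctly noting this sidesteps the $O^{\pm}_{8k}(\mathbb{F}_2)$ pitfall peculiar to characteristic two), you are done. The net tradeoff: your argument is self-contained, relying only on textbook facts about symplectic groups, whereas the paper's proof leans on two cited theorems and a GL-reduction. Your closing remark about $\mathrm{Sp}_{4k}(\mathbb{F}_2) \wr S_2$ being the stabilizer of the orthogonal decomposition $V_1 \perp V_2$ is a nice structural alternative, though as you note it imports the maximal-subgroup classification, so the transvection route is preferable.
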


\begin{proof}
    Let us first consider Clifford circuits containing only CNOT gates. Instead of a full symplectic matrix in $\textrm{Sp}_{2k}(\mathbb{F}_2)$, we can describe a CNOT circuit with only the upper left $k\times k$ block $M$, see Eq.~\eqref{eq:gates}. This is because for a circuit consisting of only CNOT gates, the off-diagonal blocks of the corresponding symplectic matrix will be zero, and the bottom right block can be obtained as $(M^{-1})^\top$. Now consider the following $2k \times 2k$ matrix that can be obtained using $A \in \mathrm{GL}_{k}(\mathbb{F}_2)$ and transversal CNOT gates.
    \begin{equation}
        \begin{pmatrix}
        \begin{array}{c|c}
        A & 0 \\
        \hline
        0 & 0 
        \end{array}
        \end{pmatrix}
        \begin{pmatrix}
        \begin{array}{c|c}
        I & I \\
        \hline
        0 & I 
        \end{array}
        \end{pmatrix}
        \begin{pmatrix}
        \begin{array}{c|c}
        A & 0 \\
        \hline
        0 & 0 
        \end{array}
        \end{pmatrix}^{-1} = 
        \begin{pmatrix}
        \begin{array}{c|c}
        I & A \\
        \hline
        0 & I 
        \end{array}
        \end{pmatrix}
        \label{eq:off_diag}
    \end{equation}
    The corresponding lower triangular matrix can be obtained by instead using the transversal CNOT gate with the control and targets switched. An important property to note about matrices of the form of Eq.~\eqref{eq:off_diag} is that composing matrices has the effect of adding the off-diagonal blocks over $\mathbb{F}_2$. In particular, for $A, B \in \mathrm{GL}_{k}(\mathbb{F}_2)$,
    \begin{equation}
        \begin{pmatrix}
        \begin{array}{c|c}
        I & A \\
        \hline
        0 & I 
        \end{array}
        \end{pmatrix}
        \begin{pmatrix}
        \begin{array}{c|c}
        I & B \\
        \hline
        0 & I 
        \end{array}
        \end{pmatrix} = 
        \begin{pmatrix}
        \begin{array}{c|c}
        I & A + B \\
        \hline
        0 & I 
        \end{array}
        \end{pmatrix}.
        \label{eq:mat_product}
    \end{equation}
    We now make use of the following two results:

    \begin{thm}[Theorem 4 of~\cite{Grassl_2013}]
        Let $A, B \in M_k(\mathbb{F}_2)$ be arbitrary binary $k \times k$ matrices, and consider the group $G_{12}$ generated by
        \begin{equation}
            G_{12} = \Biggl \langle
            \begin{pmatrix}
            \begin{array}{c|c}
            I & A \\
            \hline
            0 & I 
            \end{array}
            \end{pmatrix},
            \begin{pmatrix}
            \begin{array}{c|c}
            I & 0 \\
            \hline
            B & I 
            \end{array}
            \end{pmatrix} \Biggr \rangle.
        \end{equation}
        Then $G_{12} \cong \mathrm{GL}_{2k}(\mathbb{F}_2)$.
        \label{thm:auts}
    \end{thm}
    
    \begin{thm}[Corollary IX.8 of~\cite{malcolm2025}]
        Let $A \in M_k(\mathbb{F}_2)$. Then there exist $w$ pairs of invertible matrices $g_{i_1}$, $g_{i_2} \in \mathrm{GL}_k(\mathbb{F}_2)$, for some $w \in O(k)$ such that $A = \sum_{i_1, i_2}^w g_{i_1} \otimes g_{i_2}$.
        \label{thm:shyps}
    \end{thm}
    Since we have access to the full Clifford group on each codeblock, and $\mathrm{GL}_{2k}(\mathbb{F}_2) \subset \mathrm{Sp}_{4k}(\mathbb{F}_2)$, Theorem~\ref{thm:shyps} tells us that we can achieve arbitrary $A \in M_{2k}(\mathbb{F}_2)$ in the off-diagonal blocks of Eq.~\eqref{eq:off_diag}. As such, we can then apply Theorem~\ref{thm:auts} which implies that all CNOT gates on and between both codeblocks, i.e. $\mathrm{GL}_{4k}(\mathbb{F}_2)$, are achievable using only gates from $G^{\otimes 2}$. 
    It is well known that the Clifford group is generated by CNOT gates and single-qubit Clifford gates, and so we have the immediate result that the entire Clifford group on all $4k$ logical qubits is implementable using only gates from $G^{\otimes 2}$, that is $G^{\otimes 2} \cong \mathrm{Sp}_{8k}(\mathbb{F}_2)$.
\end{proof}

We can generalize this to $b$ blocks by considering all $2{b \choose 2}$ possible transversal CNOT pairings as well as $G$ on each codeblock. Denote as $G^{\otimes b}$ the resulting group generated by all of these gates. Then we have the following corollary as a result of Thm.~\ref{thm:two_block}: 
\begin{cor}
    If $G \cong \textrm{Sp}_{4k}(\mathbb{F}_2)$, then $G^{\otimes b} \cong \textrm{Sp}_{4kb}(\mathbb{F}_2)$.
\end{cor}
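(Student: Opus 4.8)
The plan is to reduce --- exactly as in the proof of Theorem~\ref{thm:two_block} --- to generating the CNOT subgroup $\mathrm{GL}_{2kb}(\mathbb{F}_2)$ on all $2kb$ logical qubits, and then to run the two-codeblock construction for each pair of codeblocks. The reduction is immediate: the $N$-qubit Clifford group is generated by the single-qubit Cliffords together with all CNOT circuits --- i.e. by $\mathrm{GL}_N(\mathbb{F}_2)$ in the symplectic picture --- and the single-qubit Cliffords on the $j$-th codeblock already lie in the copy of $G\cong\mathrm{Sp}_{4k}(\mathbb{F}_2)$ acting on it. The base cases $b=1$ (the hypothesis) and $b=2$ (Theorem~\ref{thm:two_block}) hold, and one may proceed either by induction on $b$ or directly for all $b$ at once.

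Partition the $2kb$ logical qubits into $b$ consecutive codeblocks of $2k$, so that elements of $\mathrm{GL}_{2kb}(\mathbb{F}_2)$ are $b\times b$ block matrices with $2k\times 2k$ entries, and each codeblock contributes the block-diagonal copy of $\mathrm{GL}_{2k}(\mathbb{F}_2)\subset G$ on its own qubits. For an ordered pair $(i,j)$ of codeblocks, the transversal CNOT between block $i$ and block $j$ realizes the block transvection with the identity in entry $(i,j)$ and identities on the diagonal; conjugating it by the $\mathrm{GL}_{2k}(\mathbb{F}_2)$ subgroups on blocks $i$ and $j$ (Eq.~\eqref{eq:off_diag}), composing (which adds the $(i,j)$-entries over $\mathbb{F}_2$, Eq.~\eqref{eq:mat_product}), and invoking Theorem~\ref{thm:shyps} yields the block transvection $T_{ij}(A)$ with an arbitrary entry $A\in M_{2k}(\mathbb{F}_2)$ --- this is verbatim the two-codeblock computation inside the proof of Theorem~\ref{thm:two_block}, and every operation involved is supported on blocks $i$ and $j$ alone. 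Taking $A$ to range over the rank-one matrices recovers every elementary transvection $I+e_{ab}$ of $\mathrm{GL}_{2kb}(\mathbb{F}_2)$ whose two indices lie in distinct codeblocks; the intra-block transvections already lie in the block-diagonal $\mathrm{GL}_{2k}(\mathbb{F}_2)$'s, or arise as commutators $[I+e_{a,c},I+e_{c,b}]=I+e_{a,b}$. Since transvections generate $\mathrm{GL}_{2kb}(\mathbb{F}_2)$ --- the $b$-fold analogue of Theorem~\ref{thm:auts} --- the CNOT subgroup is fully realized, and adjoining the single-qubit Cliffords gives $G^{\otimes b}\cong\mathrm{Sp}_{4kb}(\mathbb{F}_2)$. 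In inductive language: granting $G^{\otimes(b-1)}\cong\mathrm{Sp}_{4k(b-1)}(\mathbb{F}_2)$, the transvections $T_{ib}(A)$ and $T_{bi}(A)$ coupling the new block $b$ to each earlier block $i$ are precisely the generators missing from $G^{\otimes(b-1)}\times G$.

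Since this is genuinely a corollary I do not anticipate a real obstacle, and the only two points meriting attention are both minor bookkeeping. First, one should note that the pair-of-blocks construction of $T_{ij}(A)$ transplants unchanged from Theorem~\ref{thm:two_block}: every conjugating element and the transversal CNOT itself are supported on blocks $i$ and $j$, so the presence of the other $b-2$ codeblocks is irrelevant. Second, one should check that the passage from ``all transvections with indices in distinct codeblocks, together with the block-diagonal copies of $\mathrm{GL}_{2k}(\mathbb{F}_2)$'' to ``$\mathrm{GL}_{2kb}(\mathbb{F}_2)$'' is just the same commutator/Gaussian-elimination reasoning that proves Theorem~\ref{thm:auts}, now with $b$ blocks rather than two. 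Neither step introduces anything new.
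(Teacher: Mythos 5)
Your proposal is correct and follows essentially the same route as the paper: both reduce to generating all CNOT circuits $\mathrm{GL}_{2kb}(\mathbb{F}_2)$ from pairwise applications of Theorem~\ref{thm:two_block} across the $\binom{b}{2}$ block pairs, then adjoin single-qubit Cliffords to reach the full $\mathrm{Sp}_{4kb}(\mathbb{F}_2)$. The only difference is that you make the passage from ``arbitrary CNOTs between each pair of blocks'' to ``all of $\mathrm{GL}_{2kb}(\mathbb{F}_2)$'' explicit via block transvections and commutators, whereas the paper asserts this step without elaboration.
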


\begin{proof}
    From Thm.~\ref{thm:two_block}, we can construct arbitrary CNOT gates between any of the $b \choose 2$ block pairings. Thus we have access to any gate in $\mathrm{GL}_{2bk}(\mathbb{F}_2)$. The full Clifford group $\mathrm{Sp}_{4bk}(\mathbb{F}_2)$ is generated by these CNOT gates and single-qubit Cliffords on each logical qubit.
\end{proof}

\subsubsection{$[[16,4,4]]$ code}

\newcolumntype{C}[1]{>{\centering\arraybackslash}m{#1}}
\begin{table}
\centering
\begin{tabular}{|C{1.75cm}    
    |C{1.75cm}   
    |C{2.5cm}     
    |C{2cm}   
    |C{3cm}   
    |C{1cm}|}
\hline
\multicolumn{2}{|c|}{$C$ $[[4,2,2]]$} & 
\multicolumn{2}{c|}{$\mathfrak{D}(C)$ $[[8,4,2]]$} & 
\multicolumn{2}{c|}{$C_4 \otimes_\tau \mathfrak{D}(C)$ $[[16,4,4]]$} \\
\hline
Phys. imp. & Log. imp. & Phys. imp. & Log. imp. & Phys. imp. & Log. imp. \\
\hline
\begin{tikzpicture}
\node[scale=0.6] {
\begin{quantikz}[background color=white, row sep={0.75cm,between origins}]
& & & \\
& & & \\
& \gate{H} & \swap{1} & \\
& \gate{H} & \targX{} &
\end{quantikz} };
\end{tikzpicture} & 
\begin{tikzpicture}
\node[scale=0.6] {
\begin{quantikz}[background color=white, row sep={0.75cm,between origins}]
& \gate{H} & \swap{1} & \\
& \gate{H} & \targX{} &
\end{quantikz} };
\end{tikzpicture}
& \raggedright \scriptsize $SWAP_{3,7}$ \newline $SWAP_{4,8}$ \newline $SWAP_{3,4}$ \newline $SWAP_{7,8}$ & 
\begin{tikzpicture}
\node[scale=0.6] {
\begin{quantikz}[background color=white, row sep={0.75cm,between origins}]
& \ctrl{3} & & \\
& & \ctrl{1} & \\
& & \targ{} & \\
& \targ{} & &
\end{quantikz} };
\end{tikzpicture} & \raggedright \scriptsize $\bigotimes_{i=2}^3 SWAP_{2+4i, 3+4i}$ \newline $\bigotimes_{i=1}^4 SWAP_{i+8, i+12}$ & $\curvearrowleft$ \\
\hline
\begin{tikzpicture}
\node[scale=0.6] {
\begin{quantikz}[background color=white, row sep={0.75cm,between origins}]
& \gate{H} &  \\
& \gate{H} & \\
& \gate{H} & \\
& \gate{H} &
\end{quantikz} };
\end{tikzpicture} & 
\begin{tikzpicture}
\node[scale=0.6] {
\begin{quantikz}[background color=white, row sep={0.75cm,between origins}]
& \swap{1} & \\
& \targX{} &
\end{quantikz} };
\end{tikzpicture}
& \raggedright \scriptsize $\bigotimes_{i=1}^4 SWAP_{i,i+4}$ & 
\begin{tikzpicture}
\node[scale=0.6] {
\begin{quantikz}[background color=white, row sep={0.75cm,between origins}]
& \swap{1} & \\
& \targX{} & \\
& \swap{1} & \\
& \targX{} &
\end{quantikz} };
\end{tikzpicture} & \raggedright \scriptsize $\bigotimes_{i=0}^3 SWAP_{2+4i, 3+4i}$ & $\curvearrowleft$ \\
\hline
\begin{tikzpicture}
\node[scale=0.6] {
\begin{quantikz}[background color=white, row sep={0.75cm,between origins}]
& & & & \\
& \gate{H} & & \swap{2} &\\
& & \swap{1} & & \\
& \gate{H} & \targX{} & \targX{} & 
\end{quantikz} };
\end{tikzpicture} & 
\raggedright\begin{tikzpicture}
\node[scale=0.5] {
\begin{quantikz}[background color=white, row sep={0.75cm,between origins}]
& \gate{H} & \swap{1} & \gate[2]{U} &\\
& \gate{H} & \targX{} & &
\end{quantikz} };
\end{tikzpicture}
& \raggedright\scriptsize $SWAP_{2, 6}$ \newline $SWAP_{4, 8}$ \newline $SWAP_{3, 4}$ \newline $SWAP_{7, 8}$ \newline $SWAP_{2, 4}$ \newline $SWAP_{6, 8}$ & 
\begin{tikzpicture}
\node[scale=0.5] {
\begin{quantikz}[background color=white, row sep={0.75cm,between origins}]
& \swap{3} & \ctrl{3} & & & \\
& & & \swap{1} & \ctrl{1} & \\
& & & \targX{} & \targ{} & \\
& \targX{} & \targ{} & & &
\end{quantikz} };
\end{tikzpicture}
& \raggedright \scriptsize $SWAP_{6, 7}$ \newline $SWAP_{14, 15}$ \newline $\bigotimes_{i=1}^4 SWAP_{i+8, i+12}$ \newline $\bigotimes_{i=1}^4 SWAP_{i+4, i+12}$ & $\curvearrowleft$ \\
\hline
\begin{tikzpicture}
\node[scale=0.6] {
\begin{quantikz}[background color=white, row sep={0.75cm,between origins}]
& \gate{H} & \gate{S} &  \\
& \gate{S^\dagger} & \gate{H} & \\
& \gate{S^\dagger} & \gate{H} & \\
& \gate{H} & \gate{S} &
\end{quantikz} };
\end{tikzpicture}
& 
\begin{tikzpicture}
\node[scale=0.6] {
\begin{quantikz}[background color=white, row sep={0.75cm,between origins}]
& \ctrl{1} & \swap{1} & \\
& \targ{} & \targX{} &
\end{quantikz} };
\end{tikzpicture}
& \raggedright \scriptsize $SWAP_{1, 5} CNOT_{1,5}$ \newline $CNOT_{2,6} SWAP_{2,6}$ \newline $CNOT_{3,7} SWAP_{3,7}$ \newline  $SWAP_{4,8} CNOT_{4,8}$& 
\begin{tikzpicture}
\node[scale=0.6] {
\begin{quantikz}[background color=white, row sep={0.75cm,between origins}]
& \swap{1} & \ctrl{1} & \\
& \targX{} & \targ{} & \\
& \swap{1} & \targ{} & \\
& \targX{} & \ctrl{-1} &
\end{quantikz} };
\end{tikzpicture}& \raggedright \scriptsize $SWAP_{2,3} SWAP_{2,4}$ \newline $SWAP_{6,8} SWAP_{6,7}$ \newline $SWAP_{10,12} SWAP_{10,11}$ \newline  $SWAP_{14,15} SWAP_{14,16}$ & $\curvearrowleft$ \\
\hline
\rowcolor[RGB]{230, 242, 255} - & - & \raggedright \scriptsize $\bigotimes_{i=1}^8 H_i$ \newline $\bigotimes_{i=1}^4 SWAP_{i, i+4}$ & 
\begin{tikzpicture}
\node[scale=0.6] {
\begin{quantikz}[background color=white, row sep={0.75cm,between origins}]
& \gate{H} & \swap{2} & &  \\
& \gate{H} & & \swap{2} & \\
& \gate{H} & \targX{} & & \\
& \gate{H} & & \targX{} &
\end{quantikz} };
\end{tikzpicture}
& \raggedright \scriptsize $\bigotimes_{i=1}^{16} H_i$ & $\curvearrowleft$ \\
\hline
\rowcolor[RGB]{230, 242, 255} - & - & \raggedright \scriptsize $\bigotimes_{i=1}^4 CZ_{i, i+4}$ & 
\begin{tikzpicture}
\node[scale=0.6] {
\begin{quantikz}[background color=white, row sep={0.75cm,between origins}]
& \ctrl{2} & &  \\
& & \ctrl{2} & \\
& \control{} & & \\
& & \control{} &
\end{quantikz} };
\end{tikzpicture}
& \raggedright \scriptsize $\bigotimes_{i=0}^3 S^\dagger_{1+4i} S^\dagger_{4+4i}$ \newline $\bigotimes_{i=0}^3 S_{2+4i} S_{3+4i}$  & $\curvearrowleft$ \\
\hline
\rowcolor[RGB]{255, 222, 222} - & - & \raggedright\scriptsize $\bigotimes_{i=1}^8 S_i$ & 
\begin{tikzpicture}
\node[scale=0.6] {
\begin{quantikz}[background color=white, row sep={0.75cm,between origins}]
& \ctrl{3} & &  \\
& & \ctrl{1} & \\
& & \control{} & \\
& \control{} & &
\end{quantikz} };
\end{tikzpicture}
& \raggedright\scriptsize Ref.~\cite{berthusen2025} Appendix D.7 & $\curvearrowleft$ \\
\hline
\rowcolor[RGB]{255, 222, 222} - & - & \raggedright\scriptsize $SWAP_{6,7}$ \newline $SWAP_{3,5}$ & 
\begin{tikzpicture}
\node[scale=0.6] {
\begin{quantikz}[background color=white, row sep={0.75cm,between origins}]
&  & \targ{} & \\
& \swap{1} & & \\
& \targX{} & & \\
&  & \ctrl{-3} &
\end{quantikz} };
\end{tikzpicture}
& \raggedright\scriptsize Ref.~\cite{berthusen2025} Appendix D.8 & $\curvearrowleft$ \\
\hline 
\end{tabular}
\caption{\small SWAP-transversal gates on the non-CSS $[[4,2,2]]$ code lifting to SWAP-transversal logical gates on the $[[8,4,2]]$ symplectic double code and $[[16,4,4]]$ concatenated symplectic double code. The two-qubit gate $U$ in the third row is the $C(X,X)$ gate~\cite{sayginel2024fault}. The two rows highlighted blue are gates arising from the double cover, $\overline{H}_\tau$ and $\overline{S}_\tau$. The row highlighted red is a gate that is unique to the $[[8,4,2]]$ code and requires additional overhead to implement fault-tolerantly on the $[[16,4,4]]$ code, see Appendix D.7 of Ref.~\cite{berthusen2025}. While the physical actions are different, the logical actions on the $[[8,4,2]]$ symplectic double code and the $[[16,4,4]]$ concatenated symplectic double code are the same.}
\label{tab:actions}
\end{table}

Here we look at an explicit example of a CSD code, namely the $[[16,4,4]]$ code, and list the available logical gates for each code in the construction pipeline. The base non-CSS for this code a $[[4,2,2]]$ code, which can be interpreted as a Hadamard-transformed~\cite{Bonilla_Ataides_2021} version of the $C_4$ code defined in Section~\ref{sec:code_construction}. Specifically, we define this $[[4,2,2]]$ code with two stabilizer generators $XZZX, ZXXZ$ and two pairs of anticommuting logical operators $IZZI, ZIXI, ZIIZ, IZIX$. Similar to the CSS version, this $[[4,2,2]]$ code has several SWAP-transversal gates. Valid automorphisms and the corresponding logical actions are identified using the \texttt{autqec} package~\cite{autqec_code} and displayed in the first two columns of Table~\ref{tab:actions}. Using this code, we construct a symplectic double code by applying Eq.~\eqref{eq:pcm_double}. The parity check matrix in symplectic form for the resulting $[[8,4,2]]$ code is then:
\begin{align}
H_{[[8,4,2]]} =  \left(\begin{array}{c|c}
10010110 & 00000000 \\
01101001 & 00000000 \\
00000000 & 01101001 \\
00000000 & 10010110
\end{array} \right)
\end{align}

The physical implementation of the SWAP-transversal gates of the $[[4,2,2]]$ code lifts according to $\mathfrak{D}'$ to yield logical gates on the symplectic double $[[8,4,2]]$ code. This mapping is shown in Fig.~\ref{fig:mapping}, and can be verified between the first and third columns of Table~\ref{tab:actions}. The resulting logical action was then computed for each physical circuit. In addition to lifting the SWAP-transversal gates, we gain additional gates, $\overline{H}_\tau$ and $\overline{S}_\tau$, arising from the double cover. These two gates are listed in Table~\ref{tab:actions} in the blue highlighted rows. As mentioned in Section~\ref{sec:log_clifford}, every symplectic double code will have these two additional gates. The $[[8,4,2]]$ code has additional SWAP-transversal gates that do not arise from the lifted $[[4,2,2]]$ gates, two examples of which are shown in the red highlighted rows of Table~\ref{tab:actions}.

We then obtain the concatenated symplectic double $[[16,4,4]]$ code by concatenating the $[[8,4,2]]$ code with the $C_4$ code along the ZX-duality $\tau$. For completeness, the parity check matrix for both the $X$- and $Z$-type stabilizer generators is given as follows:
\begin{align}
H_{X, [[16,4,4]]} = H_{Z, [[16,4,4]]} = \left(\begin{array}{c}
1111000000000000 \\
0000111100000000 \\
0000000011110000 \\
0000000000001111 \\
1100101010101100 \\
1010110011001010
\end{array} \right).
\end{align}
Each of the physical gates on the $[[8,4,2]]$ code are converted to `logical' gates acting on the $C_4$ logical qubits according to Theorem~\ref{thm:transversal}. As noted in the theorem, all gates in $G_\tau$ are SWAP-transversal on the concatenated code. While the physical circuits are different between the symplectic double and its concatenated version, the corresponding logical actions are the same. On the $[[16,4,4]]$ code, we see two exceptions where the logical gate on the symplectic double code does not lift to a SWAP-transversal gate on the concatenated code: the transversal phase gate requires performing logical global phase gates on each $C_4$ block; this can be accomplished fault-tolerantly using the circuit in Appendix D.7 of Ref.~\cite{berthusen2025} but requires two-qubit gates. Additionally, the automorphism gate in the bottom row requires swapping logical qubits \textit{between} $C_4$ codeblocks. This is possible using additional $C_4$ ancilla codeblocks, see Appendix D.8 of Ref.~\cite{berthusen2025}. As noted in Table~\ref{tab:codes}, the SWAP-transversal gates for this code yield 216 unique logical gates. Adding a single targeted logical $S$ gate yields the full symplectic group on the four logical qubits. 

\subsubsection{Compilation}
\label{sec:compilation}

Compilation for CSD codes is conceptually simple in the sense that since $G \cong \textrm{Sp}_{4k}(\mathbb{F}_2)$, any logical Clifford circuit can be expressed using the generators from $G$ and Pauli corrections. Theoretically, an element of $\textrm{Sp}_{4k}(\mathbb{F}_2)$ can be decomposed into a sequence of $O(\log N)$ generators, where $N$ is the order of the symplectic group,  $N = |\textrm{Sp}_{2k}(\mathbb{F}_2)|$ = $2^{k^2} \prod_{i=1}^k 2^{2i} - 1$. For CSD codes, this implies that that any logical Clifford circuit can be decomposed into a constant number of logical generators (for a fixed $k$). However, even for a reasonable number of logical qubits $|\textrm{Sp}_{4k}(\mathbb{F})_2|$ becomes intractably large, and a brute-force search for a minimum length decomposition becomes intractable. As such, efficient heuristic algorithms that provide reasonable length decompositions will likely be used instead of exhaustive search.
Doing computation with CSD codes would likely be more convenient if the $U_i$'s of individual logical gates, i.e. $\overline{H}_1$ or $\overline{CNOT}_{1,2}$, were precompiled. Then, an $m$ gate logical Clifford circuit could be implemented with an $O(m\log N) \in O(m)$ depth circuit of single-qubit Cliffords and gate injections.   
While no longer constant-depth, this method avoids the complexity of decomposing an arbitrary logical Clifford circuit into the available generators at compile-time.

For the smallest instances of CSD codes, such as the $[[20,2,6]]$ code, it is straightforward and computationally reasonable to perform breadth first search on the Cayley graph of $G$ in order to find the decomposition of minimum length. GAP can be used to find such factorizations, from which we report that every element in $\textrm{Sp}_4(\mathbb{F}_2)$ requires at most four applications of the injected phase gate, with a majority requiring two or fewer. Hence any Clifford circuit can be implemented with at most four state injections. An example of compiling a circuit into the available logical generators is shown below:
\begin{figure}[H]
    \centering
    \includegraphics[width=0.85\linewidth]{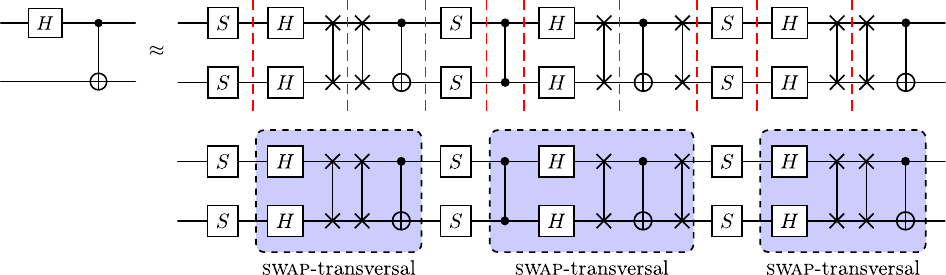}
    \label{fig:2026compilation}
\end{figure}
Each gate sequence between the red dashed lines is a single logical gadget of the $[[20,2,6]]$ CSD code, and when composed in this way implements a logical Bell pair, up to Pauli corrections and a global phase. The logical gadgets in the blue boxes can be implemented SWAP-transversally; consequently, their physical implementation can be compressed to a single layer of physical single-qubit Clifford gates and qubit relabeling, yielding the simple physical circuit shown in Eq.~\eqref{eq:s_tele_circuit}. 
We can also consider the alternative compilation scheme of Section~\ref{sec:qec} in which we inject both $\overline{S}$ and $\overline{\sqrt{X}}$ gates. Verifying the factorization again yields that at most four (combined) instances of state injection were required to implement an arbitrary Clifford circuit on the two logical qubits of the $[[20,2,6]]$ code. And again, a majority of matrices require only two phase gates, meaning that a majority of Clifford circuits can be implemented in a single QEC cycle. See Fig.~\ref{fig:1644compilation} for an example of this compilation strategy on the $[[16,4,4]]$ CSD code for preparing a graph state. Note, however, that it is not the shortest factorization we desire; instead, it is the factorization that uses the fewest gate injections. Since the elements of $G_\tau$ are effectively free, we are able to apply as many as we want between gate injections without increasing the total cost of the physical circuit. Using this fact may somewhat simplify the factorization problem, but it is still likely to be computationally difficult and require heuristic methods.

Apart from this brief discussion, we defer addressing the problem of compiling with these codes for future work. Solving this problem in general would be useful for many other codes. For example, the hyperbolic surface code known as Bring's code~\cite{Breuckmann_2024} similarly has a logical Clifford group generated by a non-standard gateset.

\subsection{State preparation}

\subsubsection{$Z$ and $X$ eigenstates}
\label{sec:zx_prep}

Taking inspiration from adaptive syndrome extraction~\cite{berthusen2025} and the state preparation of the many hypercubes code~\cite{Goto_2024}, we present the following fault-tolerant, bare ancilla state procedure for preparing the logical $\ket{0}$ and $\ket{+}$ states for a $C_4 \otimes_\tau \mathfrak{D}(C)$ code. The observation driving this procedure is this: if we can ensure that an odd number of errors are propagated to each $C_4$ block, then measuring the opposite-type $C_4$ generators will tell us whether or not an error has occurred. 

Formalizing this, let $\mathcal{S}_{C_4}$ be the weight-4 generators of $C_4 \otimes \mathfrak{D}(C)$ coming from the $C_4$ blocks, and let $\mathcal{S}_{\mathfrak{D}(C)}$ be the concatenated generators coming from $\mathfrak{D}(C)$. Also let us describe the $i$th such stabilizer generator as $\mathcal{S}^{(i)}_{C_4}$ (resp. $\mathcal{S}^{(i)}_{\mathfrak{D}(C)}$).
After measuring the $i$th concatenated generator $\mathcal{S}^{(i)}_{\mathfrak{D}(C)}$ non fault-tolerantly using the circuit in Fig.~\ref{fig:state_prep}(a) (or Fig.~4 of Ref.~\cite{berthusen2025}), there may be residual errors on the system. In particular, a hook error~\cite{Dennis_2002} on the ancilla qubit may propagate to many errors on the data qubits. However, due to the ordering of the CNOT gates in the syndrome extraction circuit of $\mathcal{S}^{(i)}_{\mathfrak{D}(C)}$, any hook error is guaranteed to propagate as single-qubit errors on several $C_4$ blocks. Furthermore, unless the hook error happened before the first CNOT gate (in which case the error is a stabilizer), then at least one $C_4$ block will only have one error propagated to it. Hence measuring the syndromes of opposite-type $C_4$ generators, e.g with the circuits shown in Fig.~\ref{fig:state_prep}(c)-(d)~\cite{paetznick2024}, will detect that an error has occurred. To slightly reduce the gate count, not all $C_4$ generators need to be measured; instead, only those that share support with $\mathcal{S}^{(i)}_{\mathfrak{D}(C)}$ might have a propagated hook error. The set of $C_4$ generators to measure is then:
\begin{equation}
    \{S^{(j)}_{C_4} ~|~ \textrm{supp}(\mathcal{S}^{(i)}_{\mathfrak{D}(C)}) ~\cap~ \textrm{supp}(\mathcal{S}_{C_4}^{(j)}) \neq \emptyset \},
\end{equation}
where $\textrm{supp}$ denotes the set of qubits on which a stabilizer generator acts non-trivially. If any measured $C_4$ generator yields a -1 measurement result, then we abort the state preparation and try again. In the event that all measured $C_4$ generators yield a +1 measurement result, we can be reasonably certain that an error has not uncontrollably propagated through the system, and we continue measuring the remaining concatenated generators in the same manner until completed. 

\begin{figure}
    \centering
    \includegraphics[width=0.95\linewidth]{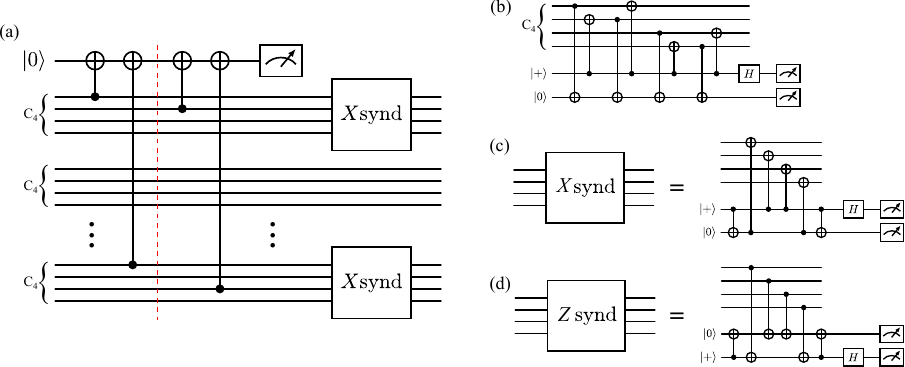}
    \caption{\small (a) State preparation procedure for CSD codes. Concatenated generators are measured in such a way that hook errors propagate to single qubit errors on several $C_4$ code blocks. Measuring the opposite-type $C_4$ generator informs whether a hook error has occurred. (b) Circuit to fault-tolerantly measure both the $XXXX$ and $ZZZZ$ stabilizers simultaneously by using `flagcillas' from Ref.~\cite{Reichardt_2020}. Hook errors propagate to the data qubits but are then caught by the opposite type stabilizer measurement. (c) Circuit to fault-tolerantly measure the $XXXX$ stabilizer of a $C_4$ block using a single additional flag qubit prepared in the $\ket{0}$ state. (d) Circuit to fault-tolerantly measure the $ZZZZ$ stabilizer of a $C_4$ block using a single additional flag qubit prepared in the $\ket{+}$ state. These two circuits were adapted from Ref.~\cite{paetznick2024}.}
    \label{fig:state_prep}
\end{figure}

Despite using two more CNOT gates than the measurement circuits of Fig.~\ref{fig:state_prep}(c)-(d), it might be advantageous to use the `flagcilla'~\cite{Reichardt_2020} $C_4$ syndrome extraction circuit as depicted in Fig.~\ref{fig:state_prep}(b). In this circuit, the measurement ancillas act as flag qubits for hook errors of the opposite type. The benefit of using this syndrome extraction circuit is that both the $XXXX$ and $ZZZZ$ stabilizer measurement results are obtained and can be used for error detection. One set of measurements should be deterministic (e.g. $ZZZZ$ if preparing the $\ket{\overline{0}^{\otimes k}}$ state), while the other will be random the first time measured but deterministic if remeasured. Hence by observing for changes between measurements, we can detect for opposite-type errors which may have occurred. Aborting the state preparation and restarting when detecting such an error could potentially lead to higher fidelity output states.
An important subtlety is that the parallelized circuit cannot be used the first time the non-deterministic $C_4$ generators are measured, as there is no way to know if their measurement results should be treated as a flag. Instead, the individual circuits, Fig.~\ref{fig:state_prep}(c)-(d), must be used for the first measurement, after which the parallel measurement circuit can be used. 

This procedure is fault-tolerant since $q$ faults can produce an outgoing error of at most weight $q$. To see this, consider the measurement of a single, weight $q_{\max}$ concatenated generator $\mathcal{S}^{(i)}_{\mathfrak{D}(C)}$. The worst-case propagated error occurs on the ancilla qubit $q_{\max}/2$ CNOT gates into the circuit. For this error to be undetectable, the $q_{\max}/2$ $C_4$ blocks each have to suffer a measurement error, or $q_{\max}/2$ additional errors have to occur on the same $C_4$ code blocks. In general, a fault occurring during a concatenated generator measurement $t$ CNOT gates into the circuit spreads to (up to stabilizers) at most a weight $\min(t,q_{\max}-t)$ hook error. By splitting the support of the concatenated generator on each $C_4$ block across the middle of the circuit, the hook error has at most weight $1$ support on any given $C_4$ block and is therefore detectable by that generator. So, for any weight $t$ hook error, a further $t$ measurement or data qubit errors are needed on the $C_4$ generator measurements to make that hook error undetectable. The error detection gadgets in panels (b)-(d) are similarly fault-tolerant in the sense that hook errors are detected, and an additional ancilla or measurement error is required for it to go undetected. Furthermore, they, and subsequent measurements of the concatenated generators do not increase the weight of incoming data errors. Hence the weight of the outgoing error is equal to the number of circuit faults during state preparation.

\begin{figure}[t]
    \centering
    \includegraphics[width=0.8\linewidth]{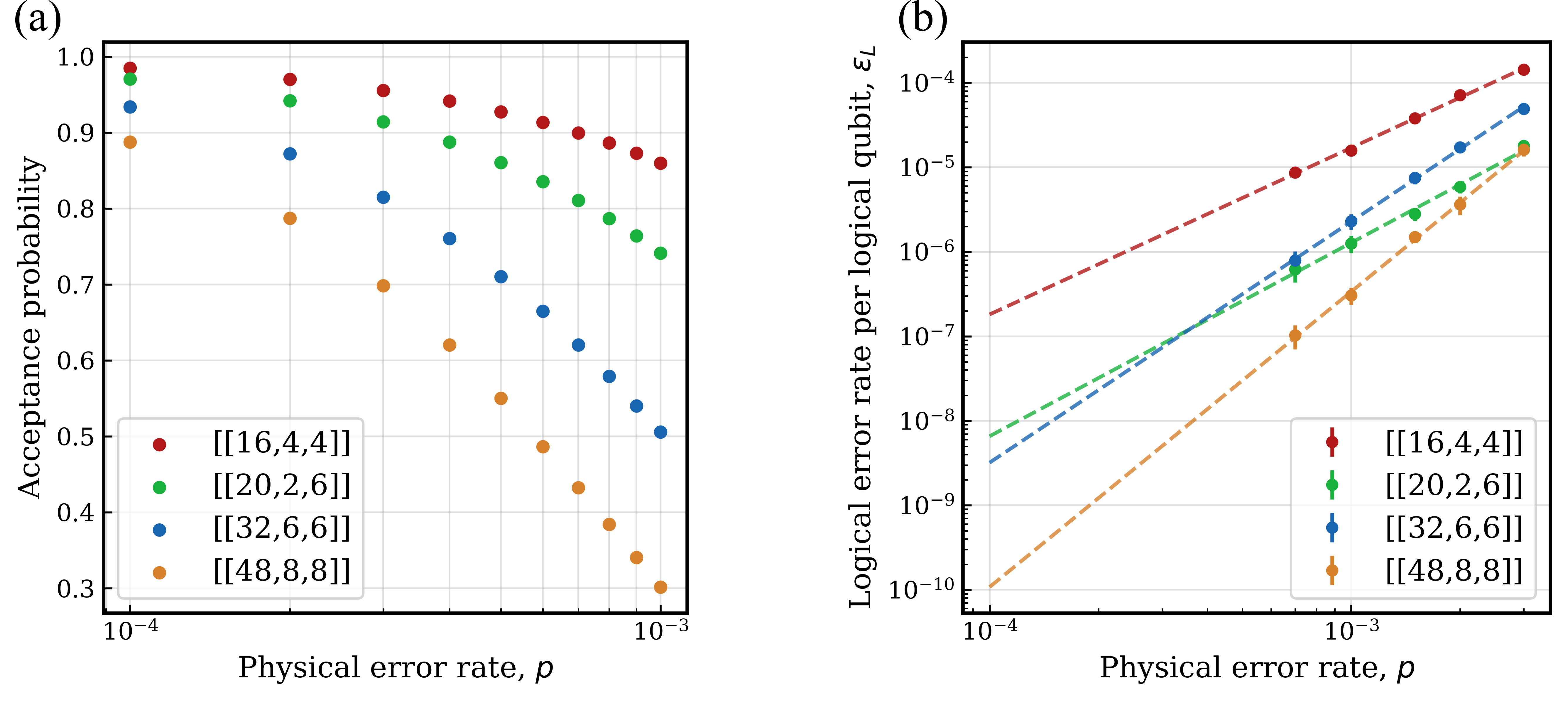}
    \caption{\small (a) Acceptance probability of preparing $\ket{\overline{0}}$ for several CSD codes using the state preparation procedure of Section~\ref{sec:zx_prep}. A prepped state is accepted if the deterministic checks all yield +1 measurement results and the opposite-type checks are consistent each time they are measured. (b) Resulting logical error rate per logical qubit, $\epsilon_L$ of the accepted states.}
    \label{fig:state_prep_sims}
\end{figure}

In Fig.~\ref{fig:state_prep_sims}, we present the results of benchmarking the state preparation procedure. We use a circuit-level noise model as described in Section~\ref{sec:mem_perf} and decode over the entire noisy circuit using the syndrome measurements obtained during the preparation as well as a final round of syndromes obtained from destructively measuring the data qubits.
In panel (a) we plot the rate at which logical states were accepted after attempting state preparation. A state is accepted if all deterministic measurement yielded a +1 result, and all nondeterministic measurements yielded consisted results each time they were measured. At physical error rates near $1\times10^{-4}$, which are being targeted for next-generation quantum hardware, the acceptance rates are high enough that only one or two attempts would typically be required to successfully prepare a codeblock.
In panel (b) we plot the logical error rate per logical qubit, $\epsilon_L = 1 - (1-p_L)^{1/2k}$, where $p_L$ is the probability that there were any logical operators predicted incorrectly. Error bars display the standard deviation of $\epsilon_L$, see Section.~\ref{sec:mem_perf}. 
Increasing the blocklength beyond the $[[48,8,8]]$ code did not further drive down the logical error rate. This is likely due to the fact that the state preparation circuits for these large CSD codes are very ($\sim 250-300$ layers) deep, leading to an overwhelming build up of errors due to the data qubits idling. Optimizing the presented state preparation procedure or designing an alternative will be necessary to realize these large codes.

To increase the acceptance rate of the procedure, we could allow for some number of detected errors before exiting and reattempting the preparation. Doing so comes at the cost of a lower fidelity of the resulting state. In Fig.~\ref{fig:state_prep_sims_allow}, we show the acceptance probability and resulting logical fidelity when allowing for some number of errors before postselecting. As expected, allowing for errors effectively reduces the distance of the code, resulting in higher logical error rates; however, the acceptance probability is significantly increased throughout the range of physical error rates.

In achieving this fault-tolerance and performance, we were required to serialize the measurement of the concatenated generators $\mathcal{S}^{(i)}_{\mathfrak{D}(C)}$ in such a way that only a single error could be propagated to each $C_4$ block, resulting in deep state preparation circuits. This is somewhat mitigable as we can measure concatenated generators in parallel which do not have overlapping support:
\begin{equation}
    \textrm{supp}(\mathcal{S}^{(i)}_{\mathfrak{D}(C)}) ~\cap~ \textrm{supp}(\mathcal{S}^{(j)}_{\mathfrak{D}(C)}) = \emptyset
\end{equation}
The concatenation structure and choice of logical operators for the $C_4$ code ensures that if the above condition is satisfied, then $\mathcal{S}^{(i)}_{\mathfrak{D}(C)}$ and $\mathcal{S}^{(j)}_{\mathfrak{D}(C)}$ will have support on different $C_4$ blocks and the condition about propagating at most a single error to each $C_4$ block is maintained. This parallelizability could be optimized by choosing alternative bases for the stabilizer generators. It may also be possible to parallelize measurement of concatenated generators that share support on the same $C_4$ block by introducing additional flag qubits.
Furthermore, it may be possible to reduce the circuit depth by taking advantage of the concatenated structure of $C_4 \otimes_\tau \mathfrak{D}(C)$ and performing some syndrome measurements using $C_4$ ancilla blocks.
Despite this, the presented tailored procedure is competitive compared to other state preparation methods, particularly in terms of ancilla qubits and CNOT gates. Assuming ancilla qubits can be reused, only $n$ additional qubits are needed to fully parallelize the $n$ flagcilla $C_4$ stabilizer measurements in Fig.~\ref{fig:state_prep}(c) (unless more than $n$ concatenated generators can be measured in parallel, which is unlikely). Additionally, the CNOT gate count is nearly optimal compared to a single round of bare ancilla syndrome extraction.

\subsubsection{$Y$ eigenstate}
\label{sec:y_prep}

For most quantum error correcting codes, the $Y$ eigenstate is prepared less frequently than the $X$ and $Z$ eigenstates. However, as we need it to perform logical $S$ gate injection, Eq.~\eqref{eq:s_tele}, we will be using it very frequently. Indeed, from Eq.~\eqref{eq:s_tele_circuit} it can be seen that essentially the \textit{only} operation we will be doing is consuming $\ket{\overline{i}} = \overline{S}\ket{\overline{+}}$ states during state injection. Unfortunately, preparing the $\ket{\overline{i}}$ state is not as straightforward as preparing the $\ket{\overline 0}$ and $\ket{\overline{+}}$ states, potentially requiring distillation~\cite{fowler2012, Goto_2024}. Another approach, and the one we focus on in this work is shown in the following circuit:
\begin{equation}
\begin{quantikz}[background color=white, row sep={1cm,between origins}]
\lstick{$\ket{+}$} & & \ctrl{1} & \gate{H} & \meter{} \wire[d][1]{c} \\
\lstick{$\ket{\overline +}^{\otimes 2k}$} & \qwbundle{4n} & \gate{\overline{Y}_i}  & & \gate{\overline{Z}_i} & \rstick{$\overline{S}_i \ket{\overline +}^{\otimes 2k}$}          
\end{quantikz}
\label{eq:y_prep}
\end{equation}
By measuring the eigenvalue of the logical operator $\overline{Y}_i$, we project the $i$th logical qubit into either the $\ket{\overline i}$ or $\ket{\overline{-i}}$ states. Then using the measurement outcome of the ancilla qubit we can apply a logical $\overline{Z}_i$ correction to deterministically prepare the $\ket{\overline{i}}$ state on logical qubit $i$. Measuring the operator in this way requires a circuit depth of $O(d)$ and is not single-shot or fault-tolerant. 
We could then repeat this process for each $\overline{Y}_i, i \in [2k]$ to obtain the state $\overline{S}_i^{\otimes 2k} \ket{\overline{+}}^{\otimes 2k} = \ket{\overline{i}}^{\otimes 2k}$, hence requiring $O(2kd)$ depth total; however, it still holds that $G \cong \textrm{Sp}_{4k}(\mathbb{F}_2)$ when $\overline{S}$ applies a logical $S$ gate to a single logical qubit. This fact was verified using GAP for the codes in Table~\ref{tab:codes}. As such, we can restrict ourselves to measuring a single logical $\overline{Y}_i$ operator, providing us with a logical $S$ gate on a single logical qubit of $C_4 \otimes_\tau \mathfrak{D}(C)$.

To implement the logical measurement Eq.~\eqref{eq:y_prep} fault-tolerantly, we modify the state preparation procedure for $Z$ and $X$ eigenstates from the previous section. 
Fig.~\ref{fig:y_state_prep} shows part of a fault-tolerant circuit to measure a logical operator $\overline{Y}_i = Z_0 X_1 ... Y_{4n-4}$. As is the case for bare ancilla stabilizer measurements, a controlled-$\overline{Y}_i$ gate can be decomposed into a sequence of controlled Pauli operators; although deviating from stabilizer measurements, $\overline{Y}_i$ might have $Y$ operators. As such, note that a controlled-$Y$ (CY) gate propagates $X$ errors on the control qubit, $CY (X \otimes I) CY^\dagger = X \otimes Y$, but not $Z$ errors, $CY (Z \otimes I) CY^\dagger = Z \otimes I$. Measuring the ancilla in the $X$ basis then projects the $i$th logical qubit into the $\pm Y$ eigenstate.
Analogously to how we managed hook errors during state prep, we can ensure that $Z$ ($X$) hook errors propagate to at most a single $Z$ ($X$) error on each $C_4$ codeblock by applying at most a single CZ (CNOT) gate to a $C_4$ codeblock. 
We then perform a round of error detection using the circuit in Fig.~\ref{fig:state_prep}(b) to detect whether a hook error has occurred. If so, the procedure fails, and the logical measurement is reattempted with a new $\ket{\overline{+}}^{\otimes 2k}$ state. Otherwise, we continue applying controlled-Pauli gates and error detection until each Pauli in $\overline{Y}_i$ has been applied. 
As we only need to guarantee a single error of each type propagating to a $C_4$ codeblock, we can apply both a CZ and CNOT gate to the same $C_4$ block; however, since CY gates transform $X$ errors into $Y$ errors, they have to be executed alone. This procedure can then be repeated to identify measurement errors and ensure a correct logical $\overline{Z}_i$ correction.

The fault-tolerance of this protocol follows from the argument of the fault-tolerance of the $Z$ and $X$ eigenstate preparation: a hook error causing $t$ faults still requires an additional $t$ measurement or data qubit errors to be undetectable. 


\begin{figure}[t]
    \centering
    \includegraphics[width=0.5\linewidth]{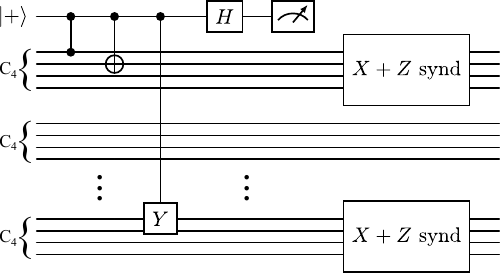}
    \caption{Circuit to measure the $\overline{Y}_i$ logical of a CSD code fault-tolerantly. The CSD code is prepared in the $\ket{\overline{+}}^{\otimes 2k}$ state, and the resulting state remains in the logical $X$ basis except for the $i$th logical qubit which gets projected into the $\ket{\pm \overline{i}}$ eigenstate. A logical $\overline{Z}_i$ correction can be applied according to the measurement outcome of the ancilla qubit to deterministically prepare the $\ket{\overline{i}}$ state on that logical qubit.}
    \label{fig:y_state_prep}
\end{figure}


\subsection{Quantum error correction}
\label{sec:qec}

The lowest overhead syndrome extraction method is by using bare ancilla qubits. Given their sparse structure, quantum low-density parity-check codes (qLDPC) codes have bare ancilla syndrome extraction circuits that can be done in a constant depth.
Furthermore, many qLDPC codes are single shot~\cite{Bombin2015}, meaning that a single round of syndrome extraction suffices to tolerate syndrome errors. 
Given that CSD codes can, in general, have relatively high weight stabilizers, bare ancilla syndrome extraction may not be the best choice. When the base non-CSS code is LDPC, so too will the concatenated symplectic double code be LDPC. In such cases, bare ancilla syndrome extraction may be a viable option.
It is possible to re-purpose the logical $\ket{0}$ and $\ket{+}$ eigenstate preparation circuits of Section~\ref{sec:zx_prep} to perform bare ancilla syndrome extraction. The only difference is that now both types of stabilizer generators must be measured, leading to syndrome extraction circuits twice as long as the---already deep---state preparation circuits. Furthermore, since these codes are not known to be single shot, the syndrome measurements may need to be repeated, potentially leading to another $O(d)$ increase in the circuit depth. Having the data qubits idle for such a long time during each QEC round will have significant negative effects on the logical error rate as well as the wall-clock time of the logical quantum algorithm. To partially address this and reduce time overheads, we note that adaptive syndrome extraction~\cite{berthusen2025} could be applied. Briefly, the procedure would proceed as follows: first the $XXXX$ and $ZZZZ$ $C_4$ stabilizer generators of every $C_4$ codeblock would be measured using the flagcilla circuit, Fig.~\ref{fig:state_prep}(c). Then depending on the whether or not errors were detected, additional concatenated generators would be measured, following the state preparation procedure. 
Applying this adaptive procedure could reduce the number of concatenated generators that are measured which, given the serialized nature in which the concatenated generators must be measured, could serve to significantly reduce the circuit depth.
We leave it to future work to determine the feasibility of performing bare-ancilla syndrome extraction in this way on LDPC instances of CSD codes. 

We avoid the difficulties of bare-ancilla syndrome extraction by instead using Knill-~\cite{Knill_1998} or Steane-style~\cite{Steane1997} syndrome extraction. These methods make use of the fact that transversal CNOT gates implements $\bigotimes_{i=1}^n \overline{CNOT}_{i, i+n}$ on CSS codes. As this operation is transversal, the entire protocol is inherently fault-tolerant. This comes at the cost of having to prepare additional ancilla codeblocks for each syndrome extraction---two for Knill QEC and one for each basis in Steane QEC. While we have fault-tolerant methods to prepare CSD codes in the $\ket{\overline{0}} ^{\otimes 2k}$ or $\ket{\overline{+}}^{\otimes 2k}$ states, the nonzero failure probability and deep circuits imply that many extra codeblocks will be required so that the data qubits are not waiting prohibitively long. For a quantum computer with sufficiently many physical qubits, this tradeoff may be worthwhile. Furthermore, for high-weight codes such as CSD codes and other concatenated codes, these methods are the only practical methods for performing syndrome extraction as they do not directly depend on the stabilizer weight.

Recall that an injected phase gate is required to achieve the full Clifford group on a single CSD codeblock. This gate injection can be combined with Knill QEC:


\begin{equation}
\begin{quantikz}[background color=white, row sep={1cm,between origins}]
\lstick{$\ket{\overline\psi}$} & \qwbundle{4n} & &  \ctrl{1} & \gate{H} & \meter{}\wire[d][2]{c} \\
\lstick{$\overline{S}^{\otimes 2k} \ket{\overline +}$} & \qwbundle{4n} & \ctrl{1}  &  \targ{} &&& \meter{}\wire[d][1]{c}      \\
\lstick{$\ket{\overline 0}$} & \qwbundle{4n} & \targ{} & & & \gate{\overline{Z}} & \gate{\overline{X}} & \rstick{$\overline{S}^{\otimes 2k} \ket{\overline\psi}$}      
\end{quantikz} \approx
\begin{quantikz}[background color=white, row sep={1cm,between origins}]
\lstick{$\ket{\overline{\psi}}$} & \gate[3]{QEC + \overline{S}} & \rstick{$\overline{S}^{\otimes 2k} \ket{\overline\psi}$} \\
 &   \\
 & 
\end{quantikz}
\label{eq:knill_sx}
\end{equation}
Also recall that an arbitrary $2k$ logical Clifford circuit on a single $C_4 \otimes_\tau \mathfrak{D}(C)$ codeblock can be compiled into a circuit alternating between injected logical $S$ gates and layers of physical single-qubit gates, see Eq.~\eqref{eq:s_tele_circuit}. We can instead use the above combined QEC and injection gadget, resulting in error-corrected quantum circuits of the following form. 
\begin{equation}
\begin{quantikz}[background color=white, row sep={1cm,between origins}]
\lstick{$\ket{\overline{\psi}}$} & \gate{U_i^{(1)}} & \gate[3]{QEC + \overline{S}} & \gate{U_i^{(2)}} & \\
&\wireoverride{n} & & \wireoverride{n}  \\
&\wireoverride{n} & & \wireoverride{n} 
\end{quantikz} ...
\begin{quantikz}[background color=white, row sep={1cm,between origins}]
& \gate[3]{QEC + \overline{S}} & \gate{U_i^{(m)}} & \rstick{$\overline{U} \ket{\overline\psi}$} \\
& &\wireoverride{n} \\   
& &\wireoverride{n}
\end{quantikz}
\label{eq:s+qec_tele_circuit}
\end{equation}
Again, since the final step of Knill-style QEC is a Pauli correction, the single qubit gates $U_i$ can be folded into this layer, yielding logical Clifford circuits where functionally the only operation happening is Knill-style QEC. 

The above method has a fairly large overhead, requiring two codeblocks per QEC cycle and injected logical phase gate. Here we present an alternative QEC method that may serve to reduce the number of codeblocks required per injected phase gate. First recall Eq.~\eqref{eq:s_tele}, the circuit to teleport a logical phase gate. Structurally, it looks very similar to performing a half-cycle of 1-bit teleportation Steane QEC~\cite{Steane1997, paetznick2024}; and indeed, it can be repurposed to do just that. By measuring the physical qubits of the original data block in the $Z$ basis, we teleport the state to the ancilla block and effectively remove any $X$-type errors in the system. We then reconstruct the $Z$-type syndromes and logical operators to determine the appropriate $X$-type logical correction to apply. Unfortunately, it is not possible to inject a phase gate when the original data qubits are measured in the $X$ basis, meaning that we do not have access to the other half of Steane QEC. However, it is possible to implement a half-cycle of Steane QEC that corrects $Z$ errors by instead injecting a $\overline{\sqrt{X}}$ gate, see Appendix A of Ref.~\cite{Zhou2000}:
\begin{equation}
\begin{quantikz}[background color=white, row sep={1cm,between origins}]
\lstick{$\overline{S}^{\otimes 2k} \ket{\overline{+}}^{\otimes 2k}$} & \qwbundle{2n} && \ctrl{1} & \gate{\overline{H}} & \gate{\overline{-Y}} \wire[d][1]{c} & \rstick{$\overline{\sqrt{X}}^{\otimes 2k} \ket{\overline\psi}$}\\
\lstick{$\ket{\overline\psi}$} & \qwbundle{2n} & \gate{\overline{H}}& \targ{}  &&  \meter{}           
\end{quantikz} \approx 
\begin{quantikz}[background color=white, row sep={1cm,between origins}]
\lstick{$\ket{\overline{\psi}}$} & \gate[2]{\overline{\sqrt{X}}} & \rstick{$\overline{\sqrt{X}}^{\otimes 2k} \ket{\overline\psi}$} \\
& 
\end{quantikz}
\label{eq:sx_tele}
\end{equation}
This teleportation circuit requires the same resource state $\overline{S}\ket{\overline{+}}$ as the logical $S$ gate injection, and we showed in Section~\ref{sec:y_prep} that it can prepared fault-tolerantly. The only other changes to Eq.~\eqref{eq:s_tele} are a different Pauli correction, and the application of logical Hadamard gates, both of which can be implemented transversally. However, note that this injection only works when the logical action of the H-SWAP gate, Eq.~\eqref{eq:hswap}, consists of solely Hadamard and SWAP gates. Then the only modification required is changing which logical qubits the Pauli corrections are applied to according to the logical SWAPs applied. All codes in the $\mathfrak{D}(C)$ column of Table~\ref{tab:codes} have at least one logical basis where the H-SWAP gate has this property, and we conjecture that this holds in general for symplectic double codes. 
Given an arbitrary $2k$-logical qubit Clifford operator $\overline{U}$, we can then decompose it into SWAP-transversal gates from $G_\tau$ (which can be compressed to a layer of physical single-qubit gates and swaps), and the injected logical $S$ and $\sqrt{X}$ gates:
\begin{equation}
\begin{quantikz}[background color=white, row sep={1cm,between origins}]
\lstick{$\ket{\overline{\psi}}$} & \gate{U_i^{(1)}} & \gate[2]{\overline{S}} & \gate{U_i^{(2)}} & \gate[2]{\overline{\sqrt{X}}} & \gate{U_i^{(3)}} & \\
&\wireoverride{n} & & \wireoverride{n} & & \wireoverride{n} & \wireoverride{n}   
\end{quantikz} ...
\begin{quantikz}[background color=white, row sep={1cm,between origins}]
& \gate[2]{\overline{S}} & \gate{U_i^{(m)}} & \rstick{$\overline{U} \ket{\overline\psi}$} \\
& &\wireoverride{n}    
\end{quantikz}
\label{eq:ssx_tele_circuit}
\end{equation}
In the case where $\overline{U}$ neatly compiles into a sequence of alternating logical $S$ and $\sqrt{X}$ gate injections, then we have successfully brought the overhead per injection down to a single ancilla codeblock while maintaining consistent QEC. Of course, many decompositions will contain repeated applications of either injection, in which case we are essentially neglecting to perform QEC in one basis. Depending on the specific sequence, it may be feasible to simply delay QEC in one basis. 
Alternatively, extra Steane QEC gadgets without injection could be applied to supplement. Even in this case, the expected number of ancilla codeblocks per injected gate will be less than two. The optimal approach depends heavily on the compilation strategy, and as such will need to be studied in future work.

\subsection{Magic state distillation}
\label{sec:magic}

To complete a universal gateset, a non-Clifford gate is required. These non-Clifford gates are typically not easily accessible on many QECCs, and so methods such as magic state distillation~\cite{bravyi2005}, code-switching~\cite{Anderson_2014}, or magic state cultivation~\cite{gidney2024} are used. For CSD codes, we instead apply the `zero-level distillation' scheme~\cite{Goto2016, hirano2024, itogawa2025}. Briefly, the zero-level distillation scheme functions by first non-fault-tolerantly preparing a logical magic state such as
\begin{equation}
    \ket{H^+} = \cos \frac{\pi}{8}\ket{0} + \sin \frac{\pi}{8} \ket{1} ,\quad \ket{CZ} = \frac{\ket{00}+\ket{10}+\ket{01}}{\sqrt{3}}.
\end{equation}
To confirm that the state was prepared correctly, logical Clifford operators are measured. As suggested by their names, the $\ket{H^+}$ magic state is a +1 eigenstate of the $H$ operator, whereas the $\ket{CZ}$ magic state~\cite{Gupta2024} is a +1 eigenstate of the $CZ$ operator. If the measurement result are different than +1, we abort and retry the procedure until successful. 

In general, performing the measurement of the logical $H$ and $CZ$ operators can be difficult; however, for codes where these operators are transversal, then they can be fault-tolerantly measured using a sequence of physical controlled-Clifford operations. 
As noted in Section~\ref{sec:log_clifford}, every CSD code has a transversal phase gate $\overline{S}_\tau$ that implements some CZ-type logical circuit. For the appropriate choice of logical basis (which we conjecture always exists), the logical action of this gate is to apply $CZ$ gates between pairs of logical qubits. From this structure, it may be possible to zero-level distill one or more logical $\ket{CZ}$ states into a single CSD codeblock. We leave it to future work to determine if this is a reasonable magic state distillation method.


\section{Numerical performance}
\label{sec:performance}

\subsection{Decoders}
\label{sec:decoders}

For qLDPC codes, belief propagation plus ordered statistics decoding (BP+OSD)~\cite{Panteleev_2021, Roffe_2020, Roffe_LDPC_Python_tools_2022} has become the decoder of choice. Concatenated codes often use alternative decoders that take into account the concatenation structure~\cite{c4c6Knill2005, Yoshida_2025, Goto_2024}, and indeed it was shown that performing message passing between concatenation layers yields the optimal decoder~\cite{Poulin_2006}.
Instead, we use the BP+OSD decoder on the detector error model~\cite{Gidney_2021} arising from the noisy state preparation or QEC circuit. We find that this general-purpose decoder provides competitive logical error rates.
We expect that a tailored hard- or soft-decision decoder for CSD codes would perform comparatively to BP+OSD with potentially reduced runtime. Or, given the extremely simple circuit which is decoded over every QEC cycle, it is not inconceivable that a look-up table or similar~\cite{aasen2025} would perform well and be reasonably efficient.

One way to somewhat adapt BP+OSD for decoding concatenated codes is by updating the qubit probabilities before decoding. The concatenated structure of $C_4 \otimes_\tau \mathfrak{D}(C)$ allows us to approximately locate physical errors: for example, observing a -1 measurement for a $C_4$ stabilizer generator tells us that an error is likely to be located on one of those four physical qubits. We can then update the qubit prior probabilities so that the decoder favors assigning errors to the suspected qubits. The syndrome is then decoded in full, as above. 
Another potential improvement is to incorporate error detection and post-selection. In general, for an $[[n,k,d]]$ CSD code, we are expected to be able to correct all errors of up to weight $\lfloor(d-1)/2 \rfloor$. The decoder may be able to correct additional errors of higher weight, but it is not guaranteed to succeed. The concatenated structure of $C_4 \otimes_\tau \mathfrak{D}(C)$ also allows us to easily detect certain high-weight errors: for example, if the syndrome indicates that $\lfloor(d-1)/2\rfloor + 1$ $C_4$ blocks are measured to have a -1 syndrome, we can be reasonably sure that the code has an high-weight error which we have a low chance of decoding correctly. As such, instead of attempting to decode, we abandon that shot or post-select on it. This does not work for all high-weight errors---only those with support on many $C_4$ blocks---but could nonetheless be a low cost method to further drive down the logical error rate. 

\subsection{Memory performance}
\label{sec:mem_perf}

Here we simulate the performance of CSD codes when performing $d$ rounds of noisy Steane QEC. The codes are noiselessly prepared in the $\ket{\overline{0}^{\otimes 2k}}$ state, $d$ rounds of noisy Steane syndrome extraction are applied, and the data qubits are destructively measured out, noiselessly. A detector error model of this circuit then is generated and sampled from. Using the syndrome information, the decoder then attempts to predict whether any logical observables have flipped during the circuit. Decoding is considered a success if the predicted observables match the observables which were reconstructed from the destructive measurement, otherwise decoding is considered a failure. The percentage of successes is called the logical error rate and is denoted by $p_L$. We then calculate the logical error rate per round per logical qubit: $\epsilon_L = 1 - (1-p_L)^{1/kd}$, which is plotted in Fig.~\ref{fig:qec}. Error bars are the standard deviation of $\epsilon_L$, see Appendix B of Ref.~\cite{berthusen2025} or Appendix B of Ref.~\cite{xu2023constantoverhead}. 
We employ the following noise model parameterized by a noise strength $p$: two-qubit gates are followed by a two-qubit depolarizing channel with probability $p$, measurement errors are flipped with probability $p$, single-qubit gates are followed by a depolarizing channel with probability $p/10$, and idle qubits are affected by a depolarizing channel with probability $p/10$. To actually perform the circuit-level simulations, we use \texttt{Stim}~\cite{Gidney_2021}.

\begin{figure}
    \centering
    \includegraphics[width=0.4\linewidth]{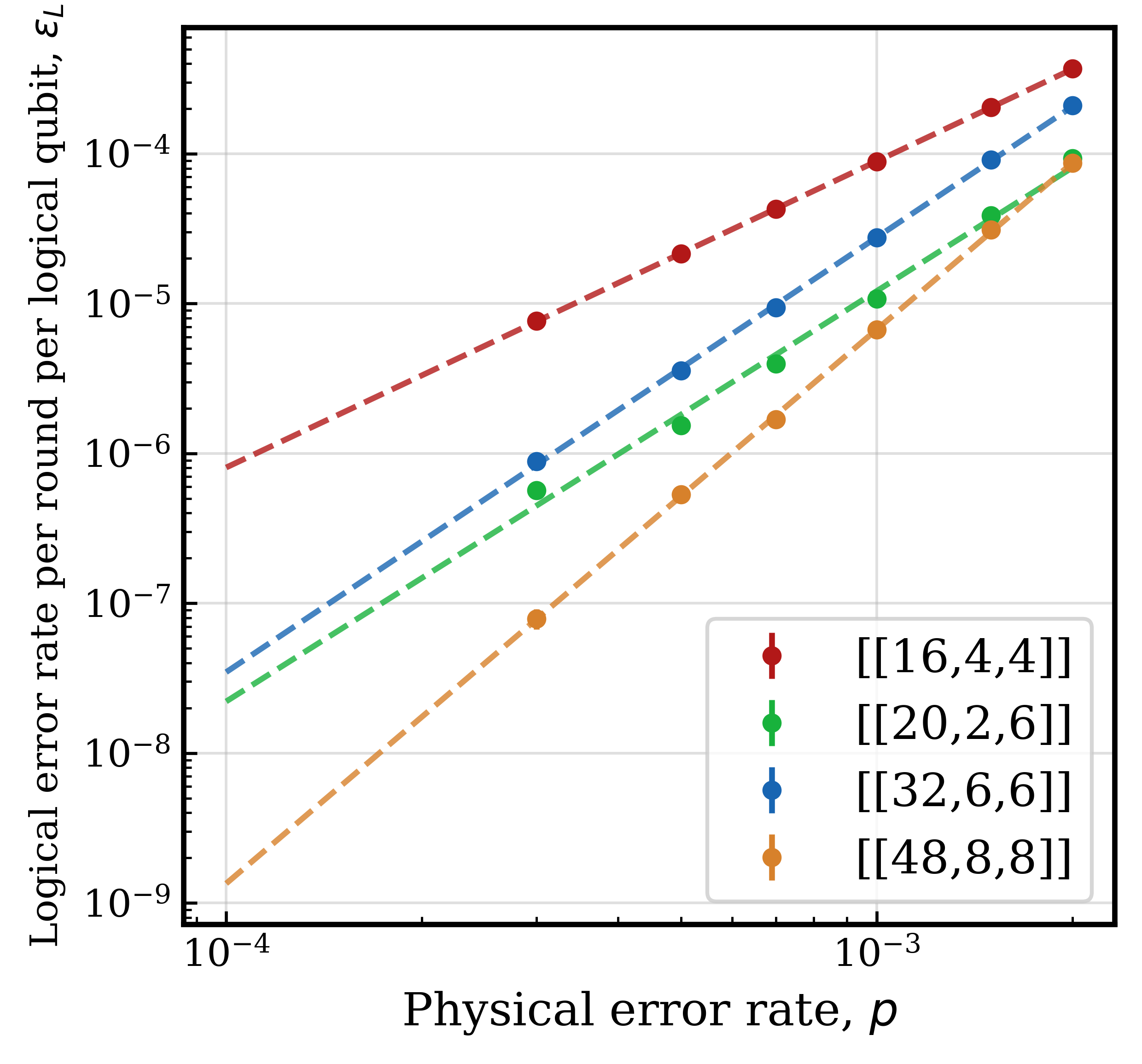}
    \caption{\small Logical error rate per round per logical qubit $\epsilon_L$ as a function of the physical error rate $p$. The circuit-level simulations consist of $d$ rounds of noisy Steane syndrome extraction before noiselessly measuring out the data qubits and constructing a final syndrome. The decoder then uses the entire syndrome volume to predict a logical correction.}
    \label{fig:qec}
\end{figure}

Instead of combining both the state preparation scheme of Section~\ref{sec:zx_prep} and the Steane syndrome extraction in the same simulation, we abstract the simulation as follows: from Fig.~\ref{fig:state_prep_sims}, we have an estimate of the logical error rate of preparing a CSD codeblock. To approximate the error distributions resulting from state preparation, we noiselessly prepare the codestate and then apply a depolarizing channel with strength $p'$ to the data qubits. The noise parameter $p'$ is tuned for each code instance such that the logical error rates of these codestates approximately match those reported in Fig.~\ref{fig:state_prep_sims}. Additionally, we apply a measurement error to the syndrome of the concatenated stabilizer generators with probability $p$. We can safely assume that the syndromes of the $C_4$ generators are error free, since we measure them multiple times and postselect on any differences. 

The results of these simulations are shown in Fig.~\ref{fig:qec}. As expected, the logical error rates are worse than the state preparation fidelities shown in Fig.~\ref{fig:state_prep_sims} by about an order of magnitude---but are nonetheless competitive. We note that in these simulations we assume to always have an ancilla codeblock available to do Steane QEC, hence avoiding any idle error arising from having to wait for a codeblock to be prepared. On a large quantum computer, this assumption may be valid; however, on a small to medium size computer where physical qubits are limited and ancilla codeblocks are not always immediately available, there may be additional idling times which would result in inferior logical error rates.

\subsection{Logical performance}
\label{sec:log_perf}

Given that the logical circuits are functionally very similar to memory circuits, we expect that replacing Steane QEC with 1-bit Steane QEC combined with gate teleportation will perform comparatively. Typically, it is thought that a logical error rate of $p_L$ allows for the implementation of roughly $1/p_L$ logical gates before the noise in the system becomes overwhelming. This is potentially an overly pessimistic estimation for CSD codes: since an arbitrary-depth logical Clifford circuit can be compiled into a constant number of logical generators (for a fixed $k$), the number of required syndrome extraction rounds is also constant. This implies that we should instead be able to implement roughly $1/p_L$ logical Clifford circuits of any length before decohering. Of course, this analysis becomes more complicated when interleaving non-Clifford gates, but it suggests that that CSD codes may provide logical computation benefits over other quantum error correcting codes.

\section{Discussion and outlook}
\label{sec:discussion_outlook}

In this work, we have introduced concatenated symplectic double codes. Given highly symmetric seed non-CSS codes, we are able to construct CSS codes with many fold-transversal gates by using the symplectic double cover construction. Concatenating these symplectic double codes with the $C_4$ code along a $ZX$-duality yields the concatenated symplectic double codes, CSS code for which the fold-transversal gates have been upgraded to SWAP-transversal while keeping the same logical action. Combined with global or targeted phase gates, the full Clifford group on all $2k$ logical qubits of a CSD codeblock can be obtained. When the phase gates are implemented as gate injections, all logical Clifford circuits have a functionally simple physical circuit in which physical single-qubit Clifford gates and phase gate injections are interleaved. Furthermore, by combining the gate injections with Knill or Steane QEC, implementing logical Clifford circuits is physically identical to performing syndrome extraction on a quantum memory. We hypothesized that these codes have convenient non-Clifford gates facilitated through zero-level distillation of the logical $\ket{CZ}$ state; however, further work is required to determine whether this method is viable in general. 
Numerically, we find promising circuit-level performance of both the state preparation procedure of Section~\ref{sec:zx_prep} as well as Steane QEC memory experiments.

To be better suited for large-scale quantum computers targeting thousands of logical qubits at logical error rates below $10^{-12}$, optimization to the methods presented in this paper will be required. To achieve those logical error rates, distances greater than the $d=8$ studied here will be required, and hence reworking state preparation to handle larger blocklengths will be necessary. However, CSD codes may already be good choices for medium-scale quantum computers targeting hundreds of logical qubits at logical error rates around $10^{-6}$ to $10^{-8}$. In this regime, the smaller instances of CSD codes are sufficiently well performing, and the compilation problem may not yet be intractable. 
As a proof of concept, it would be interesting to implement a quantum algorithm on a few logical qubits using small CSD codes, such as the $[[16,4,4]]$ or $[[20,2,6]]$ codes. 

There are several open questions to be answered and improvements to be made before CSD codes solidify themselves as a leading contender for fault-tolerant quantum computation:
\begin{enumerate}
    \item What is the best choice for non-CSS seed code $C$? Genon codes~\cite{burton2024} are qLDPC, resulting in qLDPC CSD codes such as the $[[20,2,6]]$ code. However, the parameters of the underlying topological codes are bounded~\cite{Bravyi_2010}, leading to potentially suboptimal parameters in the CSD code. The available logical gateset is equally important: given the fact that automorphism gates on $C$ lift to SWAP-transversal gates on $C_4 \otimes \mathfrak{D}(C)$, highly symmetric seed non-CSS codes are likely preferred.
    \item Can we formulate conditions for when the automorphism group of the non-CSS code $C$ yields CSD codes with an expressive SWAP-transversal gateset? Can we also determine if and when additional phase gates are sufficient to generate the full Clifford group on a single CSD codeblock?
    \item How to most efficiently perform logical compilation on these codes? As we briefly discussed in Section~\ref{sec:compilation}, the compilation problem reduces to a word factorization in the symplectic group $\textrm{Sp}_{4k}(\mathbb{F}_2)$. Due to the exponential growth in the size of the symplectic group, brute-force and exhaustive search methods will become intractable at even a modest number of logical qubits. Heuristic search methods will likely be required to perform real-time compiling. Even with fairly efficient methods, this may practically limit the size of CSD codes which can be efficiently compiled and computed with. Pre-compiling individual Clifford gates and using existing compilation strategies may be more straightforward at the cost of increased circuit depth.
    \item Is there a better method of performing logical phase gates other than state injection? 
    A lower overhead method of implementing targeted phase gates could be to use universal adapters~\cite{swaroop2025} with a QECC with a transversal phase gate.
    Even non-transversal implementations have the potential to be more resource efficient compared to state injection. However, the potentially large stabilizer generator weight suggests that syndrome extraction will need to be done using ancilla codeblocks, and so gate injection can be incorporated at almost zero cost. Being able to parallelize QEC and logical operations in this way may outweigh any benefits gained from alternative logical phase gate implementations. 
    \item Are there more efficient state preparation methods? We presented fault-tolerant and reasonably efficient methods for preparing $X$, $Y$, and $Z$ eigenstates; however, the circuit depth is prohibitive. Indeed, the larger CSD code instances were outperformed by smaller distance codes, likely due to the significant idling time. We presented some potential improvements, and there may be entirely different state preparation methods which are more time efficient. One recently introduced option is the flag-at-origin scheme~\cite{amaro2025}. To realize CSD code instances larger than that which was numerically studied in this work, alternative state preparation methods will likely be required.
    \item Do all CSD codes have a logical basis where $\overline{H}_\tau$ implements only logical Hadamards and logical SWAP gates and $\overline{S}_\tau$ implements only CZs between pairs of logical qubits?
    \item Is there a better performing, tailored decoder that makes use of the concatenated nature of CSD codes or their simple physical circuit structure?
\end{enumerate}

More generally, this work and several recent works~\cite{xu2024, reichardt2024, Hong_2024, malcolm2025, berthusen2025_2, yoder2025} have illustrated the utility of automorphism gates in performing logical quantum computation. Here, we leveraged automorphism gates to construct simple physical implementations of logical Clifford circuits. To that end, it would be interesting to find a quantum error correcting code for which the full Clifford group on some or all of its logical qubits can be implemented using solely automorphism gates. Applying the same logic as illustrated in Fig.~\ref{fig:circuit_simplified}, it follows that any logical Clifford circuit could be compiled down to a single layer of physical single-qubit Clifford gates. Universal quantum computation could then be achieved by interleaving magic state injections. Besides directly looking for codes with large and expressive automorphism groups, it may be beneficial to follow the constructive approach described in this work in which fold-transversal gates are upgraded to be SWAP-transversal through concatenation. Codes such as hyperbolic surface codes~\cite{Breuckmann_2024}, for which the entire logical Clifford group (on a subset of the logical qubits) is generated by fold-transversal gates, may be good candidates for this method.


\section*{Acknowledgments}
The authors thank Simon Burton for answering questions about Ref.~\cite{burton2024}. The authors also acknowledge helpful discussions with Shival Dasu and Andrew Potter. We note that during the preparation of this work, Ref.~\cite{kanomata2025} independently discovered the $[[30,6,5]]$ symplectic double code.

\bibliographystyle{alpha}
\renewcommand*{\bibfont}{\small}
\bibliography{bibliography}

\appendix

\section{Additional figures}

\begin{figure}[H]
    \centering
    \includegraphics[width=0.85\linewidth]{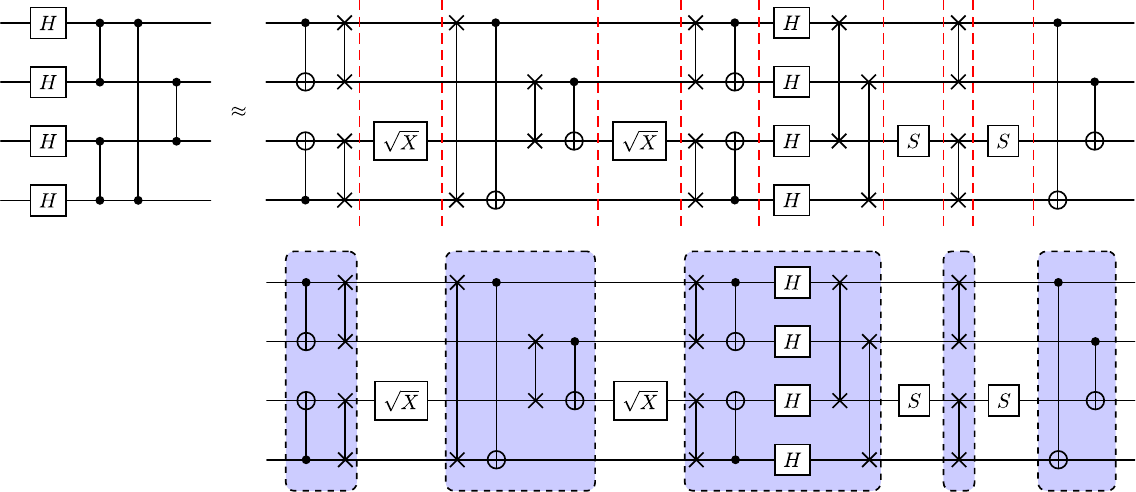}
    \caption{\small Compiling the creation of a graph state (up to Pauli correction and global phase) on the $[[16,4,4]]$ CSD code. Here, phase gates are injected on the third logical qubit, of which we require four. Based on the error correction strategy of Section~\ref{sec:qec}, this circuit can be accomplished in essentially three syndrome extraction cycles. Each gate sequence between the red dashed lines is a single logical gadget, and all logical gadgets contained within the blue boxes can be implemented SWAP-transversally. Note that this decomposition as provided by GAP uses the fewest generators, but it may be the case that there are longer sequences that use fewer phase gate injections. Additionally, changing the logical qubit on which phase gates are injected may potentially yield more efficient factorizations. Both of these concerns add to the complexity of compiling with CSD codes.}
    \label{fig:1644compilation}
\end{figure}

\begin{figure}[H]
    \centering
    \includegraphics[width=0.8\linewidth]{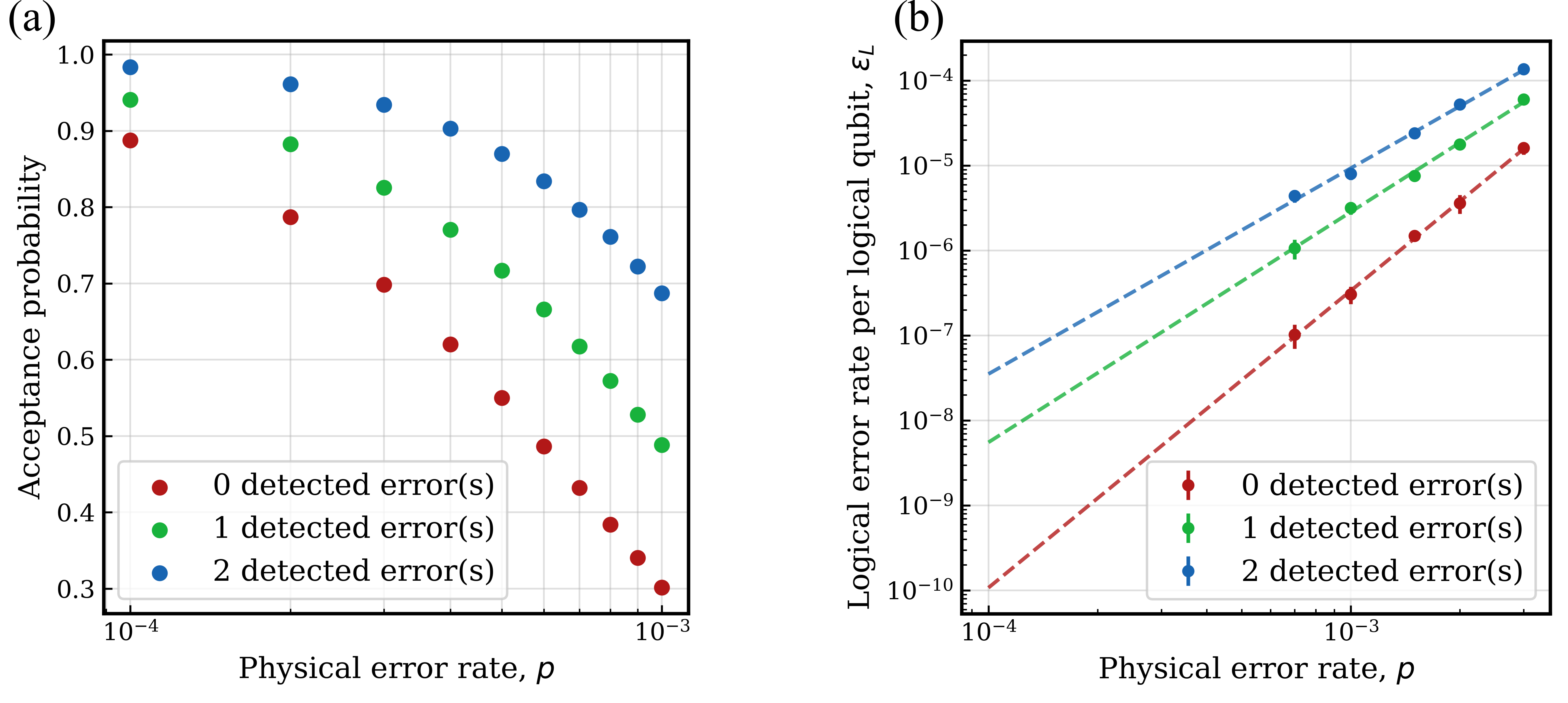}
    \caption{\small Benchmarking the $\ket{\overline{0}}$ state preparation procedure, Section~\ref{sec:zx_prep}, for the $[[48,8,8]]$ CSD code while allowing for some nonzero amount of errors before postselection. (a) Acceptance probability as a function of the physical error rate, $p$. A prepped state is accepted if fewer than $m$ detectors (parities of time adjacent syndrome measurements) trigger. (b) Resulting logical error rate per logical qubit, $\epsilon_L$, of the accepted states.}
    \label{fig:state_prep_sims_allow}
\end{figure}

\section{Seed non-CSS codes}
\label{apx:seed_codes}

\begin{align}
H_{[[4,2,2]]} =  \left(\begin{array}{c|c}
1 0 0 1 & 0 1 1 0 \\
0 1 1 0 & 1 0 0 1 
\end{array} \right)
\label{eq:noncss422}
\end{align}

\begin{align}
H_{[[5,1,3]]} =  \left(\begin{array}{c|c}
1 0 0 1 0 & 0 1 1 0 0 \\
0 1 0 0 1 & 0 0 1 1 0 \\
1 0 1 0 0 & 0 0 0 1 1 \\
0 1 0 1 0 & 1 0 0 0 1 
\end{array} \right)
\end{align}



\begin{align}
H_{[[8,3,3]]} =  \left(\begin{array}{c|c}
11111111 & 00000000 \\
00000000 & 11111111 \\  
01011010 & 00001111 \\  
01010101 & 00110011 \\  
01101001 & 01010101
\end{array} \right)
\end{align}

\begin{align}
H_{[[12,4,4]]} =  \left(\begin{array}{c|c}
100101010100 & 010101011101 \\
010101010001 & 001100001010 \\
001100000101 & 101010010111 \\
000011000101 & 010110010111 \\
000000110000 & 110011001100 \\
000000001100 & 111100000011 \\
000000000011 & 001111001111 \\
000000000000 & 000000111111
\end{array} \right)
\end{align}


\end{document}